%% file: main.tex
\documentclass[12pt]{article}
\usepackage[english]{babel}
\usepackage{bm}
\usepackage{fullpage}
\usepackage{cite}

\input{macros}

\usepackage{authblk}
\newcommand{\Ran}{{\rm{Ran}}\,}

\newtheorem*{remark}{Remark}

\allowdisplaybreaks[1] 
\usepackage{graphicx,import,xcolor,transparent}
\usepackage{centernot}

\begin{document}

\title{Entropy Concentration and Universal Typicality for Weakly Almost i.i.d.\ Quantum Sources}
\author{Nilanjana Datta}

\affil{University of Cambridge\\
Department of Applied Mathematics and Theoretical Physics\\
University of Cambridge\\
Cambridge CB3 0WA, United Kingdom
}

\date{}

\maketitle

\begin{abstract}
Weakly almost i.i.d.\ quantum sources are sequences of multipartite states whose fixed-size marginals converge, on average, to tensor powers of a reference state, while allowing arbitrary global correlations and entanglement. We establish two concentration principles for such sources: a noncommutative weak law of large numbers for empirical observables, and a universal entropy-concentration principle showing asymptotic concentration on subspaces of exponential dimension governed by the von Neumann entropy of the reference state.
These concentration principles provide a unified and conceptually transparent approach to several information-theoretic applications beyond the i.i.d.\ setting, including direct proofs of universal compression within classes of weakly almost i.i.d.\ sources sharing a common reference state, asymmetric quantum hypothesis-testing bounds, concentration results for macroscopic observables in quantum many-body systems including generalized Gibbs ensembles and for repeated local measurement statistics, as well as bounds on smooth- and spectral entropy quantities.
\end{abstract}

{\section{Introduction}
{Classical and quantum Shannon theories are traditionally formulated under the assumption that resources are independent and identically distributed (i.i.d.) across many uses. However, exact i.i.d.\ structure is often unrealistic in the presence of correlations, adversarial behaviour, or experimental imperfections, and may in practice be impossible to verify operationally. This has motivated several approaches to information theory beyond the i.i.d.\ paradigm, most notably the information-spectrum framework introduced by Han and Verdú~\cite{HanVerdu1993} and the smooth-entropy formalism~\cite{RennerBook} of one-shot information theory, both of which avoid exact tensor-power assumptions altogether (see also~\cite{DattaRenner2009}).

A complementary line of work investigates situations in which the i.i.d.\ assumption fails only mildly, with the aim of understanding which information-theoretic properties survive under controlled correlations or defects. This has led to several notions of ``almost i.i.d.'' behaviour. Early progress in this direction was made by Renner, who introduced the notion of ``almost power'' states in quantum cryptography~\cite{Renner2007}. Closely related ideas later appeared in the work of Brand{\~a}o and Plenio on the generalized quantum Stein’s lemma~\cite{BrandaoPlenio}, and more recently in the work of Lami~\cite{Lami_2025}. More recent works, including~\cite{mazzolasutterrenner1, girardi2026newapproachesiidinformation}, have compared different notions of approximate i.i.d.\ structure and clarified the relations between them.
These works make clear that ``almost i.i.d.'' is not a single well-defined notion. Different definitions capture different operational and physical aspects of approximate independence, and they form a genuine hierarchy of conditions. Consequently, understanding robustness beyond the i.i.d.\ regime requires determining which operational properties persist under which forms of approximate independence (see e.g.~\cite{girardi2026quantumshannontheoryrobust} and~\cite{mazzola2026robustgeneralizedquantumsteins}).

{In this paper we study this question for \emph{weakly almost i.i.d.}\ quantum sources, introduced in~\cite{girardi2026newapproachesiidinformation}. Roughly speaking, a sequence
\(
\bm{\rho}=(\rho_n)_n
\)
of quantum states is weakly almost i.i.d.\ along a reference state \(\rho\) if, for every fixed \(k\), the average \(k\)-site marginal of \(\rho_n\) converges in trace norm to \(\rho^{\otimes k}\). Among the currently studied notions of approximate i.i.d.\ structure, the weakly almost i.i.d.\ condition is essentially the broadest one (see e.g.~Figure~1 of~\cite{girardi2026newapproachesiidinformation}), requiring only asymptotic local consistency with an i.i.d.\ model while allowing arbitrary global correlations and multipartite entanglement. In particular, weakly almost i.i.d.\ sources may be globally pure and highly entangled even when the reference state is maximally mixed.

The main contribution of this paper is the identification of two concentration principles for weakly almost i.i.d.\ sources. The first is a noncommutative weak law of large numbers, showing spectral concentration of empirical averages of local observables around their expectations with respect to the reference state. The second is a universal entropy-concentration principle, establishing asymptotic concentration on subspaces whose exponential dimension is governed by the von Neumann entropy of the reference state.

{These principles provide a unified and conceptually transparent framework for several applications beyond the i.i.d.\ setting. These include direct proofs of universal quantum data compression within classes of weakly almost i.i.d.\ sources sharing a common reference state, asymmetric quantum hypothesis-testing bounds, and bounds on smooth and spectral entropy quantities. In several cases, the resulting proofs are significantly more direct than previous approaches~\cite{girardi2026quantumshannontheoryrobust}, with the relevant operational statements emerging naturally from the underlying concentration principles. We also derive concentration results for macroscopic observables in quantum many-body systems, including generalized Gibbs ensembles, and for repeated local measurement statistics.}
\medskip

\noindent
{\bf Layout of the paper:}
Section~\ref{sec:waiid} introduces the necessary notation and recalls the definition of weakly almost i.i.d.\ quantum sources. Section~\ref{sec:key-lemmas} establishes the two concentration principles underlying the paper: a noncommutative weak law of large numbers and a universal entropy-concentration theorem. Section~\ref{sec:applns} contains applications. Section~\ref{sec:data-comp} studies universal quantum data compression, Section~\ref{sec:HT} asymmetric quantum hypothesis testing, Section~\ref{sec:many-body} concentration of macroscopic observables in many-body systems and generalized Gibbs ensembles, and Section~\ref{sec:mmt} concentration of repeated local measurement statistics. Finally, Sections~\ref{sec:smooth} and~\ref{sec:info-spec} derive consequences for smooth and spectral entropy quantities.}

\medskip

{
\section{Weakly almost i.i.d.~sources}
\label{sec:waiid}
{Let $\mathcal H$ be a finite-dimensional Hilbert space. Let $\mathcal L(\mathcal H)$ denote the algebra of linear operators acting on $\mathcal H$, and let $\mathcal D(\mathcal H)\subset\mathcal L(\mathcal H)$ denote the set of quantum states (or density operators), i.e.\ positive semidefinite operators of unit trace on $\mathcal H$. A \emph{pure state} is a rank-one projection of the form $\ket{\psi}\!\bra{\psi}$, where $\ket{\psi}\in\mathcal H$ is a unit vector.

The identity operator on $\mathcal H$ is denoted by $\id$, while $\id_n$ denotes the identity operator of ${\mathcal H}^{\otimes n}$. An operator $\Pi\in\mathcal L(\mathcal H)$ is called an \emph{orthogonal projection} if $\Pi=\Pi^\dagger=\Pi^2$. The range of $\Pi$ is denoted by
$
\Ran(\Pi):=\{\Pi\ket{\psi}:\ket{\psi}\in\mathcal H\}.
$
For a positive semidefinite operator $X\in\mathcal L(\mathcal H)$, its support is defined as
$
\supp(X):=\Ran(\{X>0\}),
$
where $\{X>0\}$ denotes the orthogonal projection onto the span of all eigenvectors of $X$ corresponding to strictly positive eigenvalues. For an operator $A \in \mathcal L(\mathcal H)$, we denote by $A^{(i)}$ the operator in $\mathcal L(\mathcal H^{\otimes n})$ given by $A^{(i)}:=\id^{\otimes(i-1)}\otimes A\otimes \id^{\otimes(n-i)}$. 

For \(p\in[1,\infty)\), the Schatten \(p\)-norm of an operator \(X\in\mathcal L(\mathcal H)\) is defined by
\(
\|X\|_p:=\bigl(\Tr|X|^p\bigr)^{1/p},
\)
where
\(
|X|:=\sqrt{X^\dagger X}.
\)
The Schatten \(\infty\)-norm is the operator norm
\(
\|X\|_\infty:=\sup_{\|\psi\|=1}\|X\psi\|.
\)
In particular,
\(
\|X\|_1=\Tr|X|
\)
is the trace norm, and
\(
\|X\|_2=\sqrt{\Tr(X^\dagger X)}
\)
is the Hilbert--Schmidt norm.
For a self-adjoint operator $Q=\sum_i\lambda_i\ketbra{\psi_i}$ and $r\in\mathbb R$, we denote the spectral projections by
$
\{Q>r\}:=\sum_{\lambda_i>r}\ketbra{\psi_i}
$
and
$
\{Q\le r\}:=\sum_{\lambda_i\le r}\ketbra{\psi_i}.
$
The von Neumann entropy of a state $\rho \in {\mathcal D}(\mathcal H)$ is given by $S(\rho) := - \Tr \rho \log \rho$. Here and henceforth, all logarithms are taken to base $2$, and so $\exp(a)$ and $e^a$ are both understood to denote $2^a$.

A quantum channel
$
\pazocal E:\mathcal D(\mathcal H)\to\mathcal D(\mathcal K),
$
where $\mathcal H$ and $\mathcal K$ are finite-dimensional Hilbert spaces, is a linear completely positive trace-preserving (CPTP) map. We denote the identity channel by ${\rm id}:\mathcal D(\mathcal H)\to\mathcal D(\mathcal H)$.

For a state $\rho\in\mathcal D(\mathcal H)$ and a quantum channel $\pazocal C:\mathcal D(\mathcal H)\to\mathcal D(\mathcal K)$, the \emph{entanglement fidelity} of $\Lambda$ with respect to $\rho$ is defined as
\begin{align}
F_e(\rho, \pazocal C)
:=
\bra{\psi_\rho}
({\rm id}\otimes\pazocal C)
\bigl(
\ket{\psi_\rho}\!\bra{\psi_\rho}
\bigr)
\ket{\psi_\rho},
\label{eq:ent-fid}
\end{align}
where $\ket{\psi_\rho}\in\mathcal H_R\otimes\mathcal H$ is any purification of $\rho$, i.e.\
$
\Tr_R
(
\ket{\psi_\rho}\!\bra{\psi_\rho}
)
=
\rho,
$
with $\mathcal H_R$ denoting the Hilbert space of a purifying reference system $R$. It can be shown that $F_e(\rho,\pazocal C)$ is independent of the choice of purification. If $\pazocal C$ has Kraus operators $\{E_k\}_k$, then (see e.g.~\cite{NC})
\begin{align}
F_e(\rho,\pazocal C)
=
\sum_k
|\Tr(\rho E_k)|^2.
\label{eq:NC1}
\end{align}}
\medskip

We now introduce the central notion of this paper.

\begin{Def}[(Weakly almost i.i.d.\ source)]
Let $\rho\in\mathcal D(\mathcal H)$. A sequence
$\bm{\rho}=(\rho_n)_n$, with
$\rho_n\in\mathcal D(\mathcal H^{\otimes n})$, is called a
\emph{weakly almost i.i.d.\ quantum source along $\rho$} if
\begin{equation}\label{eq:waiid}
\lim_{n\to\infty}
\mathbb E_{\substack{I\subseteq[n]\\|I|=k}}
\bigl\|
(\rho_n)_I-\rho^{\otimes k}
\bigr\|_1
=0
\quad \forall\, k\in\mathbb N_+,
\end{equation}
where the expectation is uniform over all subsets
$I\subseteq[n]$ of cardinality $k$.
\end{Def}

Condition~\eqref{eq:waiid} imposes only asymptotic agreement of fixed-size marginals, on average, with those of the tensor-power state $\rho^{\otimes n}$; it does not require the global state $\rho_n$ itself to be close to $\rho^{\otimes n}$ in trace norm. Consequently, weakly almost i.i.d.\ sources may still exhibit substantial long-range correlations and multipartite entanglement.

\subsection*{Example: Haar-random pure states}
\smallskip
The following example illustrates that weakly almost i.i.d.\ sources may consist of globally pure states exhibiting strong multipartite entanglement and long-range correlations.

Let
$\rho_n=\ket{\psi_n}\!\bra{\psi_n}$, where
$\ket{\psi_n}\in\mathcal H^{\otimes n}$ is Haar-random and $\mathcal H \simeq {\mathbb C}^d$, with
$d\geq 2$. Fix $k\in\mathbb N_+$ and consider \(n\ge k\). Let \(\mathcal H_I:=\bigotimes_{i\in I}\mathcal H_i\simeq\mathcal H^{\otimes k}\), and let
\(\psi_I:=\Tr_{I^c}\ket{\psi_n}\!\bra{\psi_n}\in\mathcal D(\mathcal H_I)\)
be the reduced state associated with \(I\). 
{By unitary invariance of the Haar measure and permutation symmetry of the tensor factors, the distribution of the reduced state \(\psi_I\) depends only on \(|I|=k\).  Fix such a subset \(I\subseteq[n]\), with \(|I|=k\). Identifying
$
\mathcal H^{\otimes n}
\simeq
\mathbb C^{d^k}\otimes\mathbb C^{d^{n-k}},
$
the well-known formula for the expected purity of a Haar-random reduced state (see, e.g.,~\cite{Lubkin1978}) yields
\begin{equation}\label{eq:page}
\mathbb E_{\psi_n} \Tr(\psi_I^2)
=
\frac{d^k+d^{n-k}}{d^n+1}.
\end{equation}}

Hence, for fixed \(k\),
\bb
\mathbb E_{\psi_n} \operatorname{Tr}(\psi_I^2)\to \frac1{d^k}
\qquad \text{as } n\to\infty,
\ee
which is precisely the purity of the maximally mixed state on
\(\mathcal H^{\otimes k}\).

For an operator \(X\) acting on a \(D\)-dimensional Hilbert space, we use
\(\|X\|_1\le \sqrt D\,\|X\|_2\). Then Jensen's inequality, together with
{
\bb\operatorname{Tr}\!\left(\psi_I-\frac{\id_k}{d^k}\right)^2
=
\operatorname{Tr}(\psi_I^2)-\frac{1}{d^k},\ee} yields
\bb\label{eq:Jen}
\mathbb E_{\psi_n}
\left\|
\psi_I-\frac{\id_k}{d^k}
\right\|_1
\le
\sqrt{
d^k
\left(
\mathbb E_{\psi_n}\Tr(\psi_I^2)-\frac1{d^k}
\right)
}.
\ee
Substituting~\eqref{eq:page} into \eqref{eq:Jen} yields
\begin{equation}\label{eq:haar-est}
\mathbb E_{\psi_n}
\left\|
\psi_I-\frac{\id_k}{d^k}
\right\|_1
\le
\sqrt{
d^k
\left(
\frac{d^k+d^{n-k}}{d^n+1}
-\frac1{d^k}
\right)
},
\end{equation}
which converges exponentially fast to zero as \(n\to\infty\) for fixed \(k\). Therefore, for fixed $k$,
\bb
\mathbb E_{\psi_n}
\mathbb E_{\substack{I\subseteq[n]\\|I|=k}}
\left\|
(\rho_n)_I-\left(\frac{\id}{d}\right)^{\otimes k}
\right\|_1
\longrightarrow0 \quad \text{as} \quad n \to \infty.
\ee
Moreover, the exponential decay in~\eqref{eq:haar-est}, together with Markov's inequality and the first Borel--Cantelli lemma~(e.g.~\cite{Billingsley1995}), implies that, with probability one,
\bb
\mathbb E_{\substack{I\subseteq[n]\\|I|=k}}
\left\|
(\rho_n)_I-\left(\frac{\id}{d}\right)^{\otimes k}
\right\|_1
\longrightarrow0 \quad \text{as} \quad n \to \infty,
\ee
for every fixed $k$. 

Thus almost every Haar-random pure-state sequence is weakly almost i.i.d.\ along \(\id/d\), even though Haar-random pure states are typically highly entangled~\cite{Page1993}; see Appendix~\ref{app} for details of the above.
}

\section{The key lemmas}\label{sec:key-lemmas}

We now prove the two basic concentration lemmas on which the subsequent applications rest.

\begin{lemma}[(Noncommutative weakly almost i.i.d.\ law of large numbers)]
\label{lem:nc-weak-aiid-lln}
Let $\rho\in\mathcal D(\mathcal H)$, where $\mathcal H \simeq \mathbb{C}^d$ with $d < \infty$. Let
$\bm{\rho}=(\rho_n)_{n\ge1}$, with
$\rho_n\in\mathcal D(\mathcal H^{\otimes n})$, be weakly almost i.i.d.\ along $\rho$.
Let $A\in\mathcal L(\mathcal H)$ be self-adjoint, define
$\mu:=\Tr(\rho A)$ and
$\overline A_n:=\frac{1}{n}\sum_{i=1}^n A^{(i)}$, where
$A^{(i)}:=\id^{\otimes(i-1)}\otimes A\otimes \id^{\otimes(n-i)}$.
Then $\Tr(\rho_n\overline A_n)\to\mu$ and
$\Tr[\rho_n(\overline A_n-\mu \id_n)^2]\to0$ as $n\to\infty$.
Consequently, for every $\delta>0$,
\begin{equation}\label{eq:nc-cheb-conc}
\Tr\!\left[
\rho_n
\left\{
|\overline A_n-\mu \id_n|>\delta
\right\}
\right]
\to0
\quad \text{ as } n\to\infty.
\end{equation}
\end{lemma}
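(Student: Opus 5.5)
The plan is a second-moment (Chebyshev-type) argument in which every quantity is expressed through one- and two-site marginals of $\rho_n$; only the cases $k=1$ and $k=2$ of~\eqref{eq:waiid} will be needed.

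\emph{First moment.} Since $\Tr(\rho_n A^{(i)})=\Tr((\rho_n)_{\{i\}}A)$, we get
\[
\Tr(\rho_n\overline A_n)-\mu=\mathbb E_{i\in[n]}\Tr\bigl(((\rho_n)_{\{i\}}-\rho)A\bigr).
\]
By H\"older, the modulus is at most $\|A\|_\infty\,\mathbb E_{|I|=1}\|(\rho_n)_I-\rho\|_1$, which tends to zero by~\eqref{eq:waiid} with $k=1$.

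\emph{Second moment.} Set $B:=A-\mu\id$, so that $\Tr(\rho B)=0$ and $\overline A_n-\mu\id_n=\frac1n\sum_i B^{(i)}$. Expanding the square gives
\[
\Tr\bigl[\rho_n(\overline A_n-\mu\id_n)^2\bigr]=\frac1{n^2}\sum_{i}\Tr\bigl((\rho_n)_{\{i\}}B^2\bigr)+\frac1{n^2}\sum_{i\neq j}\Tr\bigl((\rho_n)_{\{i,j\}}\,B\otimes B\bigr).
\]
Here we use that $B^{(i)}B^{(j)}$ acts only on sites $i,j$ and equals $B\otimes B$ there. The diagonal part is bounded by $\|B\|_\infty^2/n\to0$. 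In the off-diagonal part each unordered pair appears twice, so it equals
\[
\frac{n(n-1)}{n^2}\,\mathbb E_{|I|=2}\Tr\bigl((\rho_n)_I\,B\otimes B\bigr).
\]
Replacing $(\rho_n)_I$ by $\rho^{\otimes2}$ gives $\Tr(\rho B)^2=0$. The error is at most $\|B\|_\infty^2\,\mathbb E_{|I|=2}\|(\rho_n)_I-\rho^{\otimes2}\|_1\to0$ by~\eqref{eq:waiid} with $k=2$. Hence the second moment tends to zero.

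\emph{Concentration.} Let $X:=\overline A_n-\mu\id_n$, which is self-adjoint. Its spectral projection satisfies the operator inequality $\{|X|>\delta\}\le X^2/\delta^2$, because both sides are diagonal in the eigenbasis of $X$ and on each eigenvector with eigenvalue $\lambda$ we have $\mathbf 1_{|\lambda|>\delta}\le\lambda^2/\delta^2$. Taking the expectation in $\rho_n\ge0$ then gives~\eqref{eq:nc-cheb-conc} from the second-moment result.

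There is no serious obstacle. The only point that needs care is the bookkeeping in the off-diagonal sum, specifically the identification of $\Tr(\rho_n B^{(i)}B^{(j)})$ with the two-site marginal on the \emph{unordered} pair $\{i,j\}$. The operator $B\otimes B$ is symmetric under swapping the two sites, so the ordering convention does not matter, and averaging over ordered pairs $i\ne j$ coincides with the uniform average over $|I|=2$.
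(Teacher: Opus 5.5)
Your proposal is correct and follows essentially the same second-moment/Chebyshev argument as the paper, using only the $k=1$ and $k=2$ cases of the weakly almost i.i.d.\ condition, the $O(1/n)$ bound on the diagonal terms, and the operator inequality $\{|X|>\delta\}\le X^2/\delta^2$. The only cosmetic difference is that you center the observable ($B=A-\mu\id$) before expanding the square, so the off-diagonal term tends directly to $\Tr(\rho B)^2=0$. The paper instead shows $\Tr(\rho_n\overline A_n^2)\to\mu^2$ and then combines this with the first-moment convergence.
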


\begin{remark}
{The lemma shows that empirical averages of one-site observables exhibit asymptotic spectral concentration around their expectation values with respect to the reference state $\rho$. It is the noncommutative analogue of convergence in probability in the classical weak law of large numbers.}
\end{remark}

{
\begin{proof}
Set $M:=\|A\|_\infty$. Since $A^{(i)}$ acts only on the $i$-th tensor factor,
\bb
\Tr(\rho_n\overline A_n)-\mu
=
\frac1n\sum_{i=1}^n
\Tr\!\bigl[((\rho_n)_i-\rho)A\bigr],
\ee
and hence
\bb\label{eq:convg1}
\bigl|
\Tr(\rho_n\overline A_n)-\mu
\bigr|
\le
M\,\frac1n\sum_{i=1}^n\|(\rho_n)_i-\rho\|_1
\longrightarrow0 \quad \text{as} \quad n \to \infty,
\ee
by the weakly almost i.i.d.\ assumption with $k=1$.

For the second moment, write
$\overline A_n^2=n^{-2}\sum_i(A^{(i)})^2+2n^{-2}\sum_{i<j}A^{(i)}A^{(j)}$.
{The contribution of the first sum to \(\Tr(\rho_n\overline A_n^2)\) is bounded in absolute value by
\bb
\frac1{n^2}\sum_{i=1}^n
\bigl|\Tr[\rho_n(A^{(i)})^2]\bigr|
\le
\frac1{n^2}\sum_{i=1}^n \|A\|_\infty^2
=
\frac{M^2}{n},
\ee
and therefore vanishes as \(n\to\infty\).}

For the off-diagonal part,
$\Tr(\rho_nA^{(i)}A^{(j)})=\Tr((\rho_n)_{ij}(A\otimes A))$, whence
\bb
\frac2{n^2}\sum_{i<j}\Tr(\rho_nA^{(i)}A^{(j)})
=
\frac{n(n-1)}{n^2}\mu^2+E_n,
\ee
{where the error term \(E_n\) satisfies}
\bb
|E_n|
\le
\frac{n(n-1)}{n^2}M^2
\left[
\frac1{\binom n2}
\sum_{i<j}
\|(\rho_n)_{ij}-\rho^{\otimes2}\|_1
\right].
\ee
The bracketed average tends to zero by the assumption with $k=2$, so
$\Tr(\rho_n\overline A_n^2)\to\mu^2$ as $n \to \infty$. This, together with the first-moment convergence
gives $\Tr[\rho_n(\overline A_n-\mu \id_n)^2]\to0$ as $n \to \infty$.

Finally, by functional calculus,
$(\overline A_n-\mu \id_n)^2\ge
\delta^2\{|\overline A_n-\mu \id_n|>\delta\}$.
Taking the trace against $\rho_n$ gives the noncommutative Chebyshev bound
\bb
\Tr\!\left[
\rho_n\{|\overline A_n-\mu \id_n|>\delta\}
\right]
\le
\delta^{-2}\Tr[\rho_n(\overline A_n-\mu \id_n)^2],
\ee
and the right-hand side tends to zero in the limit $n \to \infty$.
\end{proof}
}

\medskip

\begin{lemma}[(Universal typical projector)]
\label{lem:universal-typical-projector}
Let $\rho\in\mathcal D(\mathcal H)$, with $\dim\mathcal H=d<\infty$, and let
$\bm{\rho}=(\rho_n)_n$, with $\rho_n\in\mathcal D(\mathcal H^{\otimes n})$, be weakly almost i.i.d.\ along $\rho$.
Fix $q\in(0,1)$, set
\bb
\sigma_q:=(1-q)\rho+q\id/d,
\ee
$A_q:=-\log\sigma_q$, and
$h_q:=\Tr(\rho A_q)$.
For $\delta>0$, define the spectral projector 
\begin{equation}\label{eq:Pi-n}
\Pi_n
:=
\left\{
-\frac1n\log\sigma_q^{\otimes n}
\le
h_q+\delta
\right\}.
\end{equation}
Then $\Tr(\Pi_n\rho_n)\to1$ as $n\to\infty$, while
$\Tr\Pi_n\le e^{n(h_q+\delta)}$ for every $n \in {\mathbb{N}}$. Moreover,
$h_q\to S(\rho)$ as $q\to0$.
\end{lemma}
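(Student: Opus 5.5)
The plan is to reduce everything to Lemma~\ref{lem:nc-weak-aiid-lln}, applied to the single-site observable $A_q=-\log\sigma_q$. The point of mixing with $\id/d$ is that $\sigma_q\ge q\id/d>0$, so $A_q$ is a bounded self-adjoint operator with $\|A_q\|_\infty\le \log(d/q)$. This holds even when $\rho$ is not full rank, in which case $-\log\rho$ would be unbounded and the lemma would not apply directly. Since $\sigma_q^{\otimes n}$ is a tensor product of commuting factors,
\[
-\tfrac1n\log\sigma_q^{\otimes n}=\tfrac1n\sum_{i=1}^n A_q^{(i)}=\overline{(A_q)}_n .
\]
Hence $\Pi_n=\{\overline{(A_q)}_n\le h_q+\delta\}$, and by definition $\mu=\Tr(\rho A_q)=h_q$.

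\textbf{Concentration.} The projector $\id_n-\Pi_n=\{\overline{(A_q)}_n-h_q\id_n>\delta\}$ is dominated by $\{|\overline{(A_q)}_n-h_q\id_n|>\delta\}$, since both are spectral projections of the same operator and the first set of eigenvalues is contained in the second. Therefore
\[
1-\Tr(\Pi_n\rho_n)\le \Tr\bigl[\rho_n\{|\overline{(A_q)}_n-h_q\id_n|>\delta\}\bigr],
\]
and the right-hand side tends to $0$ by \eqref{eq:nc-cheb-conc}.

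\textbf{Dimension bound.} This is the standard typical-subspace trick. On $\Ran\Pi_n$ we have $\sigma_q^{\otimes n}\ge e^{-n(h_q+\delta)}\Pi_n$, by functional calculus in the common eigenbasis. Thus
\[
1=\Tr\sigma_q^{\otimes n}\ge \Tr(\Pi_n\sigma_q^{\otimes n})\ge e^{-n(h_q+\delta)}\Tr\Pi_n ,
\]
which holds for every $n$.

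\textbf{Limit $q\to0$.} This is the only mildly delicate step, because $\log$ blows up at $0$. Diagonalize $\rho=\sum_j\lambda_j\ketbra{j}$. Then $\sigma_q$ is diagonal in the same basis with eigenvalues $(1-q)\lambda_j+q/d$, so
\[
h_q=-\sum_j\lambda_j\log\bigl((1-q)\lambda_j+q/d\bigr).
\]
Terms with $\lambda_j=0$ vanish identically. Terms with $\lambda_j>0$ converge to $-\lambda_j\log\lambda_j$ by continuity of $\log$ on $(0,\infty)$. This is a finite sum, so $h_q\to S(\rho)$.

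One could also note that $h_q=S(\rho)+D(\rho\|\sigma_q)\ge S(\rho)$, although this is not needed. The main obstacle is simply the bookkeeping that the weak law requires a bounded observable, which the regularization by $q$ supplies.
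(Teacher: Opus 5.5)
Your proposal is correct and follows the paper's proof essentially step for step. It uses boundedness of $A_q$ from $\sigma_q>0$, the inclusion $\{\overline{A}_{q,n}>h_q+\delta\}\subseteq\{|\overline{A}_{q,n}-h_q\id_n|>\delta\}$ together with Lemma~\ref{lem:nc-weak-aiid-lln}, the rank bound $1\ge\Tr(\sigma_q^{\otimes n}\Pi_n)\ge e^{-n(h_q+\delta)}\Tr\Pi_n$, and the eigenvalue computation giving $h_q\to S(\rho)$; your explicit bound $\|A_q\|_\infty\le\log(d/q)$ and the identity $h_q=S(\rho)+D(\rho\|\sigma_q)$ are correct minor additions.
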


\begin{remark}
This lemma is the basic entropy-concentration result underlying the paper. It shows that every weakly almost i.i.d.\ source along \(\rho\), despite possibly exhibiting strong long-range correlations, is asymptotically supported on subspaces whose dimensions grow asymptotically no faster than \(e^{nS(\rho)}\).
\end{remark}

{
\begin{proof}
Since $\sigma_q$ is full rank, $A_q=-\log\sigma_q$ is bounded. Then
\bb
\overline A_{q,n}:=\frac{1}{n}\sum_{i=1}^n A_q^{(i)}=-\frac{1}{n}\log \sigma_q^{\otimes n},
\ee
and therefore
$\Pi_n=\{\overline A_{q,n}\le h_q+\delta\}$.

Applying Lemma~\ref{lem:nc-weak-aiid-lln} to $A_q$ gives for every $\delta >0$,
\bb
\Tr\!\left[
\rho_n
\{|\overline A_{q,n}-h_q \id_n|>\delta\}
\right]\to0 \quad \text{as} \quad n \to \infty.
\ee
Since
$\{\overline A_{q,n}>h_q+\delta\}\subseteq
\{|\overline A_{q,n}-h_q \id_n|>\delta\}$,
we obtain 
\bb
\Tr[\rho_n(\id_n-\Pi_n)]\to0 \quad \text{ as } \quad n \to \infty,
\ee
and hence $\Tr(\Pi_n\rho_n)\to1$ in this limit.

{For the rank bound, observe that, by definition of \(\Pi_n\), the spectrum of
\(\overline A_{q,n}\) on \(\operatorname{Ran}\Pi_n\) is bounded above by \(h_q+\delta\). Hence, by functional calculus,
\bb
\Pi_n e^{-n\overline A_{q,n}}\Pi_n
\ge
e^{-n(h_q+\delta)}\Pi_n .
\ee
Since \(e^{-n\overline A_{q,n}}=\sigma_q^{\otimes n}\), it follows that
\bb
1
\ge
\operatorname{Tr}(\sigma_q^{\otimes n}\Pi_n)
=
\operatorname{Tr}(\Pi_n e^{-n\overline A_{q,n}}\Pi_n)
\ge
e^{-n(h_q+\delta)}\operatorname{Tr}\Pi_n .
\ee
Therefore,
\bb
\Tr\Pi_n\le e^{n(h_q+\delta)} \quad \text{for all} \quad n \in {\mathbb{N}}.
\ee}

Finally, $\sigma_q$ commutes with $\rho$. If 
$\rho=\sum_i\lambda_i\ketbra{i}$ denotes the eigenvalue decomposition of $\rho$, then
\bb
h_q
=
-\sum_{\lambda_i>0}
\lambda_i
\log\!\left((1-q)\lambda_i+\frac qd\right),
\ee
and letting $q\to 0$ gives
$h_q\to-\sum_{\lambda_i>0}\lambda_i\log\lambda_i=S(\rho)$.
\end{proof}}

\section{Applications of the Concentration Lemmas}\label{sec:applns}
{
\subsection{Universal quantum data compression for weakly almost i.i.d.\ sources}\label{sec:data-comp}

{The universal entropy-concentration principle of Lemma~\ref{lem:universal-typical-projector} yields the universal quantum data compression result stated in Theorem~\ref{thm:univ-data-comp} below. In contrast to the approach of~\cite{girardi2026quantumshannontheoryrobust}, where this result was derived indirectly through a reduction to hypothesis testing within a broader robustness framework, the present proof shows that compression follows directly from entropy concentration via the universal typical-projector construction.}

\subsubsection{Individual versus universal optimal rates}

We first distinguish between compression schemes tailored to a single source and {\em{universal}} schemes that work uniformly over all weakly almost i.i.d.\ sources along the same reference state. 

Let $\rho\in\mathcal D(\mathcal H)$, with $\dim\mathcal H<\infty$, and let $\mathfrak S_\rho$ denote the class of all weakly almost i.i.d.\ quantum sources along $\rho$.

{\begin{Def}{(Optimal rate for an individual weakly almost i.i.d.\ source along $\rho$)}

Let $\bm{\rho}=(\rho_n)_{n} \in \mathfrak S_\rho$, with $\rho_n\in\mathcal D(\mathcal H^{\otimes n})$.
A number \(R\ge0\) is said to be achievable for the source
\(\bm\rho\) if there exist Hilbert spaces \(\mathcal K_n\subseteq \mathcal H^{\otimes n}\) and
encoding (compression) and decoding (decompression) channels
\bb
\pazocal E_n:
\mathcal D(\mathcal H^{\otimes n})
\to
\mathcal D(\mathcal K_n)
\quad \text{and} \quad 
\pazocal D_n:
\mathcal D(\mathcal K_n)
\to
\mathcal D(\mathcal H^{\otimes n}),
\ee
such that
\bb
\limsup_{n\to\infty}
\frac1n\log\dim\mathcal K_n
\le R,
\ee
and
\bb
F_e\!\left(
\rho_n,
\pazocal D_n\circ\pazocal E_n
\right)
\longrightarrow1 \quad \text{as} \quad n \to \infty.
\ee
The optimal compression rate of the source
\(\bm\rho\) is defined by
\bb
R_{\mathrm{opt}}(\bm\rho)
:=
\inf
\left\{
R:
R \text{ is achievable for }
\bm\rho
\right\}.
\ee
\end{Def}}

\begin{Def}{(Optimal universal compression rate for $\mathfrak S_\rho$)}

A number \(R\ge0\) is said to be a {\em{universal}} achievable rate of data compression for $\mathfrak S_\rho$ if there exist sequences of finite-dimensional Hilbert spaces $\mathcal K_n \subseteq {\mathcal H}^{\otimes n}$, and encoding and decoding (CPTP) maps
$$\pazocal E_n:\mathcal D(\mathcal H^{\otimes n})\to\mathcal D(\mathcal K_n) \quad \text{and} \quad
\pazocal D_n:\mathcal D(\mathcal K_n)\to\mathcal D(\mathcal H^{\otimes n}),$$
{\em{depending only on $R$ and the reference state $\rho$}}, but not on the individual source,
such that {\em{for every}} ${\bm{\omega}} =(\omega_n)_n \in \mathfrak S_\rho$, 
\bb
\limsup_{n\to\infty}\frac{1}{n}\log\dim\mathcal K_n\le R \quad \text{and} \quad F_e(\omega_n,\pazocal D_n\circ\pazocal E_n)\to1 \quad \text{as} \quad n \to \infty.
\ee
Any such compression scheme $({\mathcal E_n}, {\mathcal D_n})_n$ is said to be a {\em{universally reliable}} compression scheme for $\mathfrak S_\rho.$ The optimal {\em{universal}} compression rate for $\mathfrak S_\rho$ is given by
\bb
R_{\rm opt}^{\rm univ}(\mathfrak S_\rho):= \inf\{R \, : \, R \text{\, is a universal achievable rate for\,} {\mathfrak S}_\rho\}
\ee
\end{Def}

By definition,
$R_{\rm opt}^{\rm univ}(\mathfrak S_\rho)\ge
\sup_{\bm\omega\in\mathfrak S_\rho}R_{\rm opt}(\bm\omega)$,
and the inequality may be strict.

\begin{thm}\label{thm:univ-data-comp}
Let $\rho\in\mathcal D(\mathcal H)$, with $\dim\mathcal H<\infty$. Then
\bb\label{eq:univ-rate}
R_{\rm opt}^{\rm univ}(\mathfrak S_\rho)=S(\rho).
\ee
\end{thm}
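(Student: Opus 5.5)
The proof splits into achievability, $R_{\rm opt}^{\rm univ}(\mathfrak S_\rho)\le S(\rho)$, and a converse, $R_{\rm opt}^{\rm univ}(\mathfrak S_\rho)\ge S(\rho)$.

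\emph{Achievability.} Fix $R>S(\rho)$. By the last assertion of Lemma~\ref{lem:universal-typical-projector}, choose $q\in(0,1)$ and $\delta>0$ with $h_q+\delta<R$. Let $\Pi_n$ be the projector of \eqref{eq:Pi-n}; it depends only on $\rho$, $q$ and $\delta$. Set $\mathcal K_n:=\Ran\Pi_n$, so $\dim\mathcal K_n\le e^{n(h_q+\delta)}$ and hence $\limsup_n\frac1n\log\dim\mathcal K_n\le R$. Take the standard encoding
\[
\pazocal E_n(X)=\Pi_nX\Pi_n+\sum_j K_j X K_j^\dagger ,
\qquad K_j:=\ket{\phi_0}\!\bra{j},
\]
where $\{\ket j\}$ is an orthonormal basis of $\Ran(\id_n-\Pi_n)$ and $\ket{\phi_0}\in\mathcal K_n$ is fixed. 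Take $\pazocal D_n$ to be the inclusion $\mathcal K_n\hookrightarrow\mathcal H^{\otimes n}$. By \eqref{eq:NC1}, keeping only the Kraus operator $\Pi_n$,
\[
F_e(\omega_n,\pazocal D_n\circ\pazocal E_n)\ge|\Tr(\omega_n\Pi_n)|^2 .
\]
For any $\bm\omega\in\mathfrak S_\rho$, Lemma~\ref{lem:universal-typical-projector} applied to $\bm\omega$ gives $\Tr(\omega_n\Pi_n)\to1$, because the lemma only uses the weakly almost i.i.d.\ property along $\rho$. So the scheme is universally reliable at rate $R$, and letting $R\downarrow S(\rho)$ gives the upper bound.

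\emph{Converse.} I will use $R_{\rm opt}^{\rm univ}(\mathfrak S_\rho)\ge\sup_{\bm\omega}R_{\rm opt}(\bm\omega)\ge R_{\rm opt}(\bm\rho^{\rm iid})$ with $\bm\rho^{\rm iid}:=(\rho^{\otimes n})_n\in\mathfrak S_\rho$. It then suffices to show that no scheme with rate $R<S(\rho)$ has $F_e(\rho^{\otimes n},\pazocal D_n\circ\pazocal E_n)\to1$. This is the standard Schumacher converse. Write Kraus operators $D_lE_k$ for the composition; each $D_lE_k$ has rank at most $\dim\mathcal K_n$. Let $P_{lk}$ be the projector onto the range of $D_lE_k$. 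Then
\[
|\Tr(\rho^{\otimes n}D_lE_k)|^2\le \Tr(\rho^{\otimes n}P_{lk})\,\Tr(\rho^{\otimes n}E_k^\dagger D_l^\dagger D_lE_k),
\]
by Cauchy--Schwarz in the form $|\Tr(\rho P X)|^2\le\Tr(\rho P)\Tr(\rho X^\dagger X)$. Trace preservation of the composite map means $\sum_{k,l}E_k^\dagger D_l^\dagger D_lE_k=\id_n$. Hence
\[
F_e\le\sup\{\Tr(\rho^{\otimes n}P):\Tr P\le\dim\mathcal K_n\}.
\]
That supremum is the sum of the $\dim\mathcal K_n\approx e^{nR}$ largest eigenvalues of $\rho^{\otimes n}$. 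I will bound it by splitting the eigenvalues into the i.i.d.\ typical ones, each at most $e^{-n(S(\rho)-\delta)}$, and the atypical ones, whose total mass tends to $0$ by the classical weak law of large numbers. This gives a bound of the form $e^{nR}e^{-n(S-\delta)}+o(1)\to0$. Therefore $R_{\rm opt}(\bm\rho^{\rm iid})\ge S(\rho)$.

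The achievability part is essentially immediate from the lemma. The main obstacle I expect is the converse: the encoding is a general CPTP map with unrestricted Kraus operators, so the rank-based Cauchy--Schwarz reduction has to be done carefully. That reduction is exactly what makes the bound depend only on the largest eigenvalues of $\rho^{\otimes n}$.
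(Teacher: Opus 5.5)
Your proposal is correct and follows the paper's proof. Achievability is the same: your encoder is the paper's $\pazocal E_n$ with $\tau_n=\ket{\phi_0}\!\bra{\phi_0}$, and the bound $F_e\ge|\Tr(\omega_n\Pi_n)|^2$ is used in the same way. The converse is likewise reduced to the i.i.d.\ member $(\rho^{\otimes n})_n$ of $\mathfrak S_\rho$. The one difference is that the paper cites the strong converse of Schumacher's theorem, whereas you prove it directly by the Kraus-operator Cauchy--Schwarz reduction to the sum of the $\dim\mathcal K_n$ largest eigenvalues of $\rho^{\otimes n}$; that argument is valid and makes the proof self-contained.
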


\begin{proof}
Fix $R>S(\rho)$. By Lemma~\ref{lem:universal-typical-projector}, there exist $q\in(0,1)$, $\delta>0$, and spectral projectors
\bb
\Pi_n=\{-\frac{1}{n}\log\sigma_q^{\otimes n}\le h_q+\delta\},
\ee
where $\sigma_q=(1-q)\rho+q\id/d$, such that
for every $\bm\omega\in\mathfrak S_\rho$, 
\bb
\Tr(\Pi_n\omega_n)\to1 \quad \text{as} \quad n \to \infty,
\ee 
while
$\Tr\Pi_n\le e^{n(h_q+\delta)}$ and
$h_q\to S(\rho)$ as $q\to0$.
Let us choose $q$ and $\delta$ so that $h_q+\delta<R$.

Define
$\mathcal K_n:={\rm Ran}\,\Pi_n$.
Let $\tau_n$ be any fixed state in ${\mathcal D}(\mathcal K_n)$, and define for any $\nu \in {\mathcal D}(\mathcal H^{\otimes n})$,
\bb
\pazocal E_n(\nu):=\Pi_n\nu \Pi_n+\Tr[(\id_-\Pi_n)\nu]\tau_n.
\ee
Let \(\pazocal D_n\) denote the decoding map induced by the inclusion
\(\mathcal K_n\subseteq\mathcal H^{\otimes n}\), i.e.\ extending states on
\(\mathcal K_n\) by zero on \(\mathcal K_n^\perp\).
Then
\bb\frac{1}{n}\log\dim\mathcal K_n\leq h_q + \delta <R \quad {\text{for all}}\, \,n \in {\mathbb{N}}.\ee
{Let
$
\pazocal C_n:=\pazocal D_n\circ\pazocal E_n
$
denote the overall compression--decompression channel. The channel \(\pazocal C_n\) has a Kraus representation with \(\Pi_n\) as one of its Kraus operators. Hence, \eqref{eq:NC1} implies that, for every
\(\bm\omega\in\mathfrak S_\rho\), and for every $n \in {\mathbb{N}}$,
\bb
F_e(\omega_n,\pazocal C_n)
\ge
|\Tr(\omega_n\Pi_n)|^2.
\ee
Since \(\Tr(\omega_n\Pi_n)\to1\) as $n \to \infty$, it follows that
$
F_e(\omega_n,\pazocal C_n)\to1$ in this limit.}
Hence, every rate $R>S(\rho)$ is universally achievable, i.e.
\bb\label{eq:one-way}
R_{\rm opt}^{\rm univ}(\mathfrak S_\rho) \leq S(\rho)
\ee
\smallskip

{Conversely, note that the i.i.d.\ source \((\rho^{\otimes n})_n\) belongs to the class
\(\mathfrak S_\rho\). Hence any universally reliable compression scheme for
\(\mathfrak S_\rho\) must, in particular, compress the i.i.d.\ source
\((\rho^{\otimes n})_n\) reliably. By the strong converse part of Schumacher's theorem
(see e.g.~\cite{Winter1999} and references therein), any such sequence of compression
schemes, with compressed Hilbert spaces \((\mathcal K_n)_n\), must satisfy
\bb
\liminf_{n\to\infty}
\frac1n\log\dim\mathcal K_n
\ge S(\rho).
\ee
Therefore \(R_{\rm opt}^{\rm univ}(\mathfrak S_\rho)\ge S(\rho)\). Together with
\eqref{eq:one-way}, this proves the theorem.}
\end{proof}

The preceding theorem concerns universal compression over the entire class $\mathfrak S_\rho$. By contrast, the optimal compression rate of an individual weakly almost i.i.d.\ source ${\bm{\rho}} \in {\mathfrak S_\rho}$ may be strictly smaller than the von Neumann entropy $S(\rho)$ of the reference state.

Indeed, let $\rho=\id/d$, so that $S(\rho)=\log d$. By Schumacher’s theorem~\cite{Schumacher}, the genuinely i.i.d.\ source $(\rho^{\otimes n})_n$ has optimal compression rate $\log d$. On the other hand, as discussed in Section~\ref{sec:waiid}, there exist globally pure states
$\rho_n=\ket{\psi_n}\!\bra{\psi_n}$
such that $(\rho_n)_n$ is weakly almost i.i.d.\ along $\id/d$. Since the encoder and decoder are allowed to depend on the individual source, one may compress $\rho_n$ into a one-dimensional Hilbert space. Indeed, let $\ket0$ span a one-dimensional compressed space, define
$\pazocal E_n(\omega):=\Tr(\omega)\ket0\!\bra0$
and
$\pazocal D_n(\ket0\!\bra0):=\rho_n$\footnote
{
More precisely, letting \(\mathcal K_n:=\operatorname{span}\{\ket0\}\), define
$
\pazocal E_n(\omega):=\Tr(\omega)\ket0\!\bra0$, and 
$\pazocal D_n(\tau):=\Tr(\tau)\rho_n$ for any $\tau\in\mathcal D(\mathcal K_n).
$}.
Then
$(\pazocal D_n\circ\pazocal E_n)(\rho_n)=\rho_n$,
and hence for all $n \in {\mathbb{N}},$
$F_e(\rho_n,\pazocal D_n\circ\pazocal E_n)=1$.
Therefore
$R_{\rm opt}(\bm\rho)=0$,
even though the reference entropy is
$S(\rho)=\log d$.
}
\subsection{Asymmetric binary hypothesis testing}\label{sec:HT}

We now apply Lemmas~\ref{lem:nc-weak-aiid-lln} and~\ref{lem:universal-typical-projector} to asymmetric binary quantum hypothesis testing. The resulting theorem (Theorem~\ref{thm:waiid-stein}) may be viewed as a robustness version of the direct part of quantum Stein’s lemma~\cite{HiaiPetz1991, OgawaNagaoka2000}: although the null hypothesis is only weakly almost i.i.d., the optimal universal type-II error exponent (i.e.\ the quantum Stein exponent) remains unchanged.

Theorem~\ref{thm:waiid-stein} also illustrates the strength of the entropy-concentration framework developed in Section~\ref{sec:key-lemmas}. The same robustness result for quantum hypothesis testing was previously obtained in~\cite{girardi2026quantumshannontheoryrobust}, but through a substantially more involved analysis embedded within a broader operational robustness framework. In contrast, the present proof is direct and conceptually transparent: the desired tests arise immediately from the concentration properties established in Lemmas~\ref{lem:nc-weak-aiid-lln} and~\ref{lem:universal-typical-projector}. In particular, the resulting tests are universal, depending only on the reference states $\rho$ and $\sigma$, and not on the individual weakly almost i.i.d.\ source. 
\medskip

As in the previous section, let $\rho\in\mathcal D(\mathcal H)$, with $\dim\mathcal H<\infty$, and let $\mathfrak S_\rho$ denote the class of all weakly almost i.i.d.\ quantum sources ${\bm{\rho}}= (\rho_n)_n$ along $\rho$.
We consider the following binary hypothesis testing problem:
\begin{itemize}
    \item {\bf{Null hypothesis}} $H_0:$ the source belongs to the class $\mathfrak S_\rho$.
    \item {\bf{Alternative hypothesis}} $H_1:$ the source is an i.i.d.~source $\sigma^{\otimes n}$,
where $ \sigma \in {\mathcal D}({\mathcal H}) $.

\end{itemize}
If $0 \le T_n \leq {\id}_n$ defines the binary POVM used to distinguish between these two hypotheses, then for any ${\bm{\rho}}= (\rho_n)_n \in {\mathfrak S_\rho}$, the probabilities of type-I and type-II errors are, respectively, given by
\bb
\alpha_n := \Tr[({\id}_n -T_n) \rho_n]\quad ; \quad 
\beta_n := \Tr[T_n \sigma^{\otimes n}]
\ee
In the asymmetric setting, one minimizes the probability of type-II error ($\beta_n$) under the constraint that the probability of type-I error ($\alpha_n$) is below some fixed threshold $\varepsilon \in (0,1)$. 
The quantity of interest is the optimal asymptotic type-II error exponent (also called the quantum Stein exponent), which can be expressed in terms of the hypothesis testing relative entropy as follows:
\bb\label{eq:stexpo} 
\liminf_{n\to\infty}\frac1n D_H^\varepsilon(\rho_n\|\sigma^{\otimes n}),\ee
where for any two states $\tau, \omega$ on a finite-dimensional Hilbert space,
{\bb D_H^\varepsilon(\tau\|\omega)
:=
-\log
\inf\left\{
\Tr(T\omega):
0\le T\le \id,\ 
\Tr[(\id-T)\tau]\le\varepsilon
\right\}.\ee}
The celebrated quantum Stein's lemma~\cite{OgawaNagaoka2000, Hayashi_2007} states that if the null hypothesis is also i.i.d.\ (say, given by $\rho^{\otimes n}$) and ${\rm supp} \, \rho \subseteq {\rm supp} \, \sigma$, then the quantum Stein exponent given by \eqref{eq:stexpo} is equal to the Umegaki relative entropy~\cite{Umegaki1962}:
\bb
D(\rho\| \sigma) := \Tr[ \rho(\log \rho - \log \sigma)]
\ee

\begin{thm}
\label{thm:waiid-stein}
Let $\rho,\sigma\in\mathcal D(\mathcal H)$, with $\dim\mathcal H<\infty$ and $\sigma>0$, and let $\bm\rho = (\rho_n)_n\in \mathfrak S_\rho$. Then, for every $\varepsilon\in(0,1)$,
\bb\label{eq:stein-waiid}
\liminf_{n\to\infty}\frac1n D_H^\varepsilon(\rho_n\|\sigma^{\otimes n})
\ge D(\rho\|\sigma).
\ee
More precisely, for every \(R<D(\rho\|\sigma)\), there exists a sequence of tests
\((T_n)_n\), depending only on \(\rho\) and \(\sigma\), such that, for every
\(\bm\rho=(\rho_n)_n\in\mathfrak S_\rho\),
\bb
\Tr(T_n\rho_n)\to1 \quad \text{as} \quad n \to \infty
\quad\text{while}\quad
\Tr(T_n\sigma^{\otimes n})\le e^{-nR}
\,\,\text{for all} \,\, n \in {\mathbb{N}}.\ee
\end{thm}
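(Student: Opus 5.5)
The plan is to build a universal test by combining two spectral projectors. The first is the universal typical projector of Lemma~\ref{lem:universal-typical-projector}. The second is a projector that tests the empirical average of $-\log\sigma$, controlled by Lemma~\ref{lem:nc-weak-aiid-lln}. Since $\sigma>0$, the observable $B:=-\log\sigma$ is bounded and self-adjoint. Lemma~\ref{lem:nc-weak-aiid-lln} applies to it with
\[
b:=\Tr(\rho B)=-\Tr(\rho\log\sigma)=D(\rho\|\sigma)+S(\rho).
\]
Writing $\overline B_n=-\frac1n\log\sigma^{\otimes n}$, I set $P_n:=\{\overline B_n\ge b-\delta\}$. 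By \eqref{eq:nc-cheb-conc}, $\Tr(\rho_n P_n)\to1$ for every $\bm\rho\in\mathfrak S_\rho$. Moreover $P_n$ commutes with $\sigma^{\otimes n}$, and by functional calculus $P_n\sigma^{\otimes n}P_n\le e^{-n(b-\delta)}P_n\le e^{-n(b-\delta)}\id_n$. Neither projector depends on the particular source, only on $\rho$, $\sigma$, $q$ and $\delta$.

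The key difficulty is that $\Tr P_n$ may be as large as $d^n$, so $P_n$ alone gives a type-II bound of only $e^{-n(b-\delta)}d^n$. I would therefore cut down the dimension with the typical projector $\Pi_n$ of Lemma~\ref{lem:universal-typical-projector}, which satisfies $\Tr\Pi_n\le e^{n(h_q+\delta)}$. The obstacle is that $\Pi_n$, a spectral projector of $\sigma_q^{\otimes n}$, need not commute with $P_n$ or $\sigma^{\otimes n}$. I would handle this by sandwiching: define $T_n:=P_n\Pi_nP_n$, which satisfies $0\le T_n\le\id_n$. Then cyclicity gives
\[
\Tr(T_n\sigma^{\otimes n})=\Tr\bigl(\Pi_n P_n\sigma^{\otimes n}P_n\Pi_n\bigr)\le e^{-n(b-\delta)}\Tr\Pi_n\le e^{-n(b-h_q-2\delta)}.
\]
Because $h_q\to S(\rho)$, we have $b-h_q-2\delta\to D(\rho\|\sigma)$ as $q,\delta\to0$. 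So for any $R<D(\rho\|\sigma)$ I would first fix $q,\delta$ with $b-h_q-2\delta\ge R$, and this yields the bound for all $n$.

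For the type-I error I would use a gentle-measurement argument. Set $\rho_n':=P_n\rho_nP_n$. The gentle measurement lemma gives
\[
\|\rho_n-\rho_n'\|_1\le 2\sqrt{\Tr[\rho_n(\id_n-P_n)]}\to0.
\]
Hence
\[
\Tr(T_n\rho_n)=\Tr(\Pi_n\rho_n')\ge\Tr(\Pi_n\rho_n)-\|\rho_n-\rho_n'\|_1\to1,
\]
using $\Tr(\Pi_n\rho_n)\to1$ from Lemma~\ref{lem:universal-typical-projector}. This holds for every $\bm\rho\in\mathfrak S_\rho$ with the same tests $T_n$, which proves universality.

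Finally, for any $\varepsilon\in(0,1)$, we have $\Tr[(\id_n-T_n)\rho_n]\le\varepsilon$ for all large $n$. Hence $D_H^\varepsilon(\rho_n\|\sigma^{\otimes n})\ge nR$ eventually, so $\liminf_n\frac1nD_H^\varepsilon(\rho_n\|\sigma^{\otimes n})\ge R$. Letting $R\uparrow D(\rho\|\sigma)$ gives \eqref{eq:stein-waiid}. The only delicate step is the noncommutativity between $\Pi_n$ and $P_n$, and it is resolved by the sandwich ordering chosen above: $P_n$ sits next to $\sigma^{\otimes n}$, and $\Pi_n$ controls only the trace.
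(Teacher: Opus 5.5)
Your proposal is correct and follows the paper's proof essentially step for step. The paper uses the same sandwiched test $T_n=Q_nP_nQ_n$ (your $P_n\Pi_nP_n$, with the letters swapped), the gentle operator lemma for the type-I error, and the bound $Q_n\sigma^{\otimes n}Q_n\le e^{-n(a-\delta)}Q_n$ combined with $\Tr P_n\le e^{n(h_q+\delta)}$ for the type-II error, before choosing $q,\delta$ so that $a-h_q-2\delta\ge R$ exactly as you do.
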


{\begin{proof}
Set
\(
A:=-\log\sigma
\)
and
\(
a:=\Tr(\rho A)=\Tr[\rho(-\log\sigma)].
\)
For \(\delta>0\), define the spectral projector
\bb
Q_n
:=
\left\{
-\frac1n\log\sigma^{\otimes n}
\ge
a-\delta
\right\}
=
\left\{
\overline A_n
\ge
a-\delta
\right\},
\ee
where
\(
\overline A_n
=
\frac1n\sum_{i=1}^n A^{(i)}.
\) Thus \(Q_n\) is precisely the spectral projector associated with the
empirical observable appearing in
Lemma~\ref{lem:nc-weak-aiid-lln}. Hence
\(
\Tr(Q_n\rho_n)\to1
\)
as \(n\to\infty\).
Intuitively, \(Q_n\) projects onto the spectral region in which
\(\sigma^{\otimes n}\) is exponentially suppressed at rate
approximately \(a\).

Next, fix \(q\in(0,1)\). By
Lemma~\ref{lem:universal-typical-projector}, there exists a sequence of
projections \(P_n\) such that
\(
\Tr(P_n\rho_n)\to1
\)
as \(n\to\infty\), while
\bb
\Tr P_n
\le
e^{n(h_q+\delta)}
\qquad
\forall n\in\mathbb N,
\ee
where
\(
h_q
:=
\Tr[\rho(-\log\rho_q)]
\)
and
\(
\rho_q
:=
(1-q)\rho+q\frac{\id}{d}.
\)
The projections \(P_n\) provide entropy concentration: they asymptotically capture all of the mass of \(\rho_n\), while their dimensions grow at most exponentially at rate \(h_q\).

We now combine the spectral concentration information encoded in \(Q_n\) with the entropy-concentration information encoded in \(P_n\). Define
\bb
T_n:=Q_nP_nQ_n.
\ee
Since \(P_n\) and \(Q_n\) are projections,
\(
0\le T_n\le \id_n,
\)
and hence
\(
\{T_n,\id_n-T_n\}
\)
defines a valid binary POVM. Since
\(
\Tr(Q_n\rho_n)\to1
\)
as \(n\to\infty\), the gentle operator lemma
(see e.g.~Lemma~9.4.2 of~\cite{Wilde2017}) implies
\bb
\|Q_n\rho_nQ_n-\rho_n\|_1\to0 \quad \text{as} \quad n \to \infty.
\ee
Moreover,
$
\Tr(T_n\rho_n)
=
\Tr(P_nQ_n\rho_nQ_n)
=
\Tr(P_n\rho_n)
+
\Tr\!\left[
P_n(Q_n\rho_nQ_n-\rho_n)
\right].
$
Since \(\|P_n\|_\infty\le1\),
\bb
\left|
\Tr\!\left[
P_n(Q_n\rho_nQ_n-\rho_n)
\right]
\right|
\le
\|Q_n\rho_nQ_n-\rho_n\|_1.
\ee
Therefore,
\bb
\Tr(T_n\rho_n)
\ge
\Tr(P_n\rho_n)
-
\|Q_n\rho_nQ_n-\rho_n\|_1.
\ee
Since
\(
\Tr(P_n\rho_n)\to1
\)
and
\(
\|Q_n\rho_nQ_n-\rho_n\|_1\to0,
\)
it follows that
\(
\Tr(T_n\rho_n)\to1
\)
as \(n\to\infty\). Hence, the probability of type-I error for the sequence of binary POVMs defined by $(T_n)_n$ approaches $0$, asymptotically.

Next, we analyze the asymptotic behaviour of the corresponding probability of type-II error. Note that \(Q_n\) commutes with \(\sigma^{\otimes n}\). Moreover, on
\(\Ran Q_n\),
\(
-\frac1n\log\sigma^{\otimes n}\ge a-\delta.
\)
Hence, by functional calculus,
\bb
Q_n\sigma^{\otimes n}Q_n
\le
e^{-n(a-\delta)}Q_n.
\ee
Therefore,
\bb
\Tr(T_n\sigma^{\otimes n})
=
\Tr(Q_nP_nQ_n\sigma^{\otimes n})
\le
e^{-n(a-\delta)}\Tr(P_n)
\le
e^{-n(a-h_q-2\delta)}.
\ee
Finally,
\(
a-h_q
\to
\Tr[\rho(-\log\sigma)]-S(\rho)
=
D(\rho\Vert\sigma)
\)
as \(q\to0\). Hence, choosing \(q\) and \(\delta\) sufficiently small
yields
\bb
\Tr(T_n\sigma^{\otimes n})
\le
e^{-nR}
\qquad
\forall n\in\mathbb N
\ee
for any \(R<D(\rho\Vert\sigma)\).

Since
\(
\Tr(T_n\rho_n)\to1,
\)
for every \(\varepsilon\in(0,1)\), the POVM
\(
\{T_n,\id_n-T_n\}
\)
is feasible in the definition of
\(D_H^\varepsilon(\rho_n\Vert\sigma^{\otimes n})\) for all sufficiently
large \(n\). Hence
\(
D_H^\varepsilon(\rho_n\Vert\sigma^{\otimes n})
\ge
nR
\)
for all sufficiently large \(n\). Dividing by \(n\), taking the limit
inferior, and then letting
\(
R\uparrow D(\rho\Vert\sigma)
\)
proves~\eqref{eq:stein-waiid}.
\end{proof}}

{\begin{remark}
The theorem is optimal in the universal setting. Indeed, the i.i.d.\ source
\((\rho^{\otimes n})_n\) belongs to \(\mathfrak S_\rho\), so any universal testing strategy
must in particular work for the genuinely i.i.d.\ null hypothesis. The strong converse part of
quantum Stein's lemma therefore rules out any universal exponent larger than
\(D(\rho\|\sigma)\). Thus, in the universal setting, the quantum Stein exponent is the same as in
the i.i.d.\ case.
\end{remark}}

\begin{remark}
By contrast, an individual-source converse fails in general. Take
$\rho=\sigma=\id/d$, so that $D(\rho\|\sigma)=0$, and let
$\rho_n=\ket{\psi_n}\!\bra{\psi_n}$, where
$\ket{\psi_n}\in\mathcal H^{\otimes n}$ is Haar-random and
$d:=\dim\mathcal H$. As shown in Section~\ref{sec:waiid}, with probability one the sequence $(\rho_n)_n$ is weakly almost i.i.d.\ along $\id/d$. Choosing the source-dependent test $T_n:=\rho_n$, we have
$\Tr(T_n\rho_n)=1$ and $\Tr(T_n\sigma^{\otimes n})=\frac{1}{d^{n}}$, and hence for any $\varepsilon \in (0,1),$
\bb
\liminf_{n\to\infty}\frac1n
D_H^\varepsilon(\rho_n\|\sigma^{\otimes n})
\ge
\log d
>
D(\rho\|\sigma).
\ee
Thus no individual-source converse of the form
$\limsup_{n\to\infty}\frac{1}{n}D_H^\varepsilon(\rho_n\|\sigma^{\otimes n})\le D(\rho\|\sigma)$
can hold for arbitrary weakly almost i.i.d.\ null hypotheses.
\end{remark}
\begin{remark}
Note that our proof of Theorem~\ref{thm:waiid-stein} is tailored to the setting in which the weakly almost i.i.d.\ source appears in the null hypothesis. Indeed, Lemma~\ref{lem:universal-typical-projector} provides universal projectors onto subspaces capturing asymptotically all of the weight of every weakly almost i.i.d.\ source along a fixed reference state, which is precisely what is needed to ensure asymptotically vanishing type-I error.

If instead the null hypothesis is i.i.d.\ and the alternative hypothesis is weakly almost i.i.d., one would need upper bounds on quantities of the form \(\Tr(T_n\sigma_n)\), where \((\sigma_n)_n\) is weakly almost i.i.d.\ and the tests \((T_n)_n\) are designed to accept the i.i.d.\ null hypothesis. Our concentration lemmas do not yield such bounds, since the weakly almost i.i.d.\ structure constrains only fixed-size marginals while permitting arbitrary long-range correlations in the global alternative states.

\end{remark}

{
\subsection{Many-body concentration and generalized Gibbs ensembles}\label{sec:many-body}

{We next apply Lemma~\ref{lem:nc-weak-aiid-lln} to quantum many-body systems. Here
\(\mathcal H^{\otimes n}\) is interpreted as the Hilbert space of an \(n\)-body system with local
Hilbert space \(\mathcal H\), and a weakly almost i.i.d.\ source
\(\bm\rho=(\rho_n)_n\) describes a sequence of correlated many-body states whose fixed-size
local marginals asymptotically agree, on average, with those of the product state
\(\rho^{\otimes n}\). The following theorem shows that, despite possible long-range
correlations and multipartite entanglement, empirical averages of finitely many commuting
one-site observables concentrate around the values predicted by the reference state  $\rho \in {\mathcal D}({\mathcal H})$.}

\begin{thm}[(Thermodynamic typicality)]
\label{thm:thermodynamic-typicality}
Let $\bm\rho=(\rho_n)_n$, with $\rho_n\in\mathcal D(\mathcal H^{\otimes n})$, be weakly almost i.i.d.\ along $\rho\in\mathcal D(\mathcal H)$. Let
$A_1,\ldots,A_m\in\mathcal L(\mathcal H)$ be mutually commuting self-adjoint observables, and define
$a_j:=\Tr(\rho A_j)$ and 
$\overline A_{j,n}:=\frac1n\sum_{i=1}^n A_j^{(i)},
$ where
\(
A_j^{(i)}
=
\id^{\otimes(i-1)}\otimes A_j\otimes \id^{\otimes(n-i)}
\).
For $\delta>0$, let
$P_{j,n}:=\{|\overline A_{j,n}-a_j\id_n|\le\delta\}$.
Then
\bb
\Tr\!\left[
\rho_n\prod_{j=1}^m P_{j,n}
\right]\to1
\quad \text{as} \quad n\to\infty.
\ee
Equivalently, for every fixed $\delta>0$, the probability that at least one empirical observable $\overline A_{j,n}$ takes a value outside the interval $[a_j-\delta,a_j+\delta]$ tends to zero as $n\to\infty$.
\end{thm}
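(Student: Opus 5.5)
The plan is to reduce the statement to $m$ separate applications of Lemma~\ref{lem:nc-weak-aiid-lln} and combine them with a union bound. Commutativity makes this work: the $A_j$ commute on $\mathcal H$, so all the $A_j^{(i)}$ commute on $\mathcal H^{\otimes n}$. Hence the empirical averages $\overline A_{1,n},\dots,\overline A_{m,n}$ commute and are simultaneously diagonalizable. Their spectral projections $P_{j,n}$ therefore commute, and $P_n:=\prod_{j=1}^m P_{j,n}$ is itself an orthogonal projection, namely the projection onto the joint spectral subspace where every $\overline A_{j,n}$ lies within $\delta$ of $a_j$. This is the step that uses the hypothesis: without commutativity the product need not be positive, and the trace $\Tr(\rho_n \prod_j P_{j,n})$ would not have a probabilistic meaning.

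Next, I would bound the complementary projection. Write $\id_n-P_{j,n}=\{|\overline A_{j,n}-a_j\id_n|>\delta\}$. In a common eigenbasis every projection involved is diagonal, with $0/1$ entries. This gives the operator inequality
\[
\id_n-\prod_{j=1}^m P_{j,n}\;\le\;\sum_{j=1}^m(\id_n-P_{j,n}).
\]
It is simply the union bound applied to the indicator functions on the joint spectrum. Taking the trace against $\rho_n\ge0$ yields
\[
1-\Tr\Bigl[\rho_n\prod_{j=1}^m P_{j,n}\Bigr]\le\sum_{j=1}^m\Tr\bigl[\rho_n\{|\overline A_{j,n}-a_j\id_n|>\delta\}\bigr].
\]

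Finally, I apply Lemma~\ref{lem:nc-weak-aiid-lln} to each self-adjoint $A_j$, with $\mu=a_j$. By \eqref{eq:nc-cheb-conc}, each of the finitely many summands on the right tends to $0$. Since $\Tr[\rho_n P_n]\le1$ trivially, the claim $\Tr[\rho_n P_n]\to1$ follows.

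The equivalent formulation follows from the Born rule. Because the $\overline A_{j,n}$ commute, they can be measured jointly, with outcome statistics given by their joint spectral measure in the state $\rho_n$. The probability that some $\overline A_{j,n}$ falls outside $[a_j-\delta,a_j+\delta]$ is exactly $1-\Tr[\rho_n P_n]$. There is no real obstacle here. The only point needing care is justifying the operator union bound, which relies on simultaneous diagonalization.
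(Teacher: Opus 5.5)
Your proposal is correct and follows the paper's proof essentially verbatim: commutativity gives simultaneous diagonalization, and the operator union bound $\id_n-\prod_{j}P_{j,n}\le\sum_j(\id_n-P_{j,n})$ reduces the claim to $m$ applications of Lemma~\ref{lem:nc-weak-aiid-lln} via \eqref{eq:nc-cheb-conc}. Your added remarks, on why commutativity makes $\prod_j P_{j,n}$ a genuine projection and on the joint-measurement (Born rule) reading of the equivalent formulation, are accurate.
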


The theorem says that, for any finite family of commuting one-site observables, the corresponding empirical observables concentrate around the values predicted by the reference state $\rho$, even for highly correlated many-body states which are weakly almost i.i.d.\ along $\rho$. Hence, weakly almost i.i.d.\ many-body states are thermodynamically indistinguishable from the corresponding i.i.d.\ state $\rho^{\otimes n}$ at the level of finitely many empirical observables, despite possibly containing arbitrary long-range correlations and multipartite entanglement.

\begin{remark}
In a spin-$\frac12$ system with local Hilbert space
$\mathcal H=\mathbb C^2$, one may take
\bb
A_1=\sigma_z,\,\,
A_2=\frac{\id+\sigma_z}{2}
=
|\uparrow\rangle\!\langle\uparrow|.
\ee
The corresponding empirical observables describe respectively the average magnetization per site in the $z$-direction and the density of spin-up sites.

For a spin-$1$ lattice, with local Hilbert space $\mathcal H=\mathbb C^3$, one may instead take
\bb
A_1=S_z, \,\,A_2=S_z^2,
\ee
where $S_z$ denotes the usual spin-$z$ operator. The associated empirical observables correspond respectively to the magnetization density and quadrupole density. Theorem~\ref{thm:thermodynamic-typicality} then implies that these empirical observables concentrate around the values predicted by the reference state $\rho$, even though the underlying many-body states need not be close to the i.i.d.\ state $\rho^{\otimes n}$.
\end{remark}

\begin{proof}
Since the observables $A_1,\ldots,A_m$ commute, so do the empirical observables $\overline A_{1,n},\ldots,$ $\overline A_{m,n}$ and their spectral projections $P_{1,n},\ldots,P_{m,n}$. From Lemma~\ref{lem:nc-weak-aiid-lln} we have that, for each $j \in \{1, \ldots, m\},$
$\Tr[\rho_n(\id_n-P_{j,n})]\to0$
as $n \to \infty$. Using the operator inequality
$\id_n-\prod_{j=1}^m P_{j,n}\le\sum_{j=1}^m(\id_n-P_{j,n})$, which follows by simultaneous diagonalization of the commuting projections \(P_{j,n}\),
we obtain
\bb
1-
\Tr\!\left[
\rho_n\prod_{j=1}^m P_{j,n}
\right]
\le
\sum_{j=1}^m
\Tr[\rho_n(\id_n-P_{j,n})]
\longrightarrow0 \quad \text{as} \quad n \to \infty.
\ee
\end{proof}
\medskip
{Integrable quantum many-body systems possess, in addition to the Hamiltonian, an extensive family of mutually commuting conserved quantities
\(
Q_1,Q_2,\ldots
\)
that strongly constrain their dynamics, preventing conventional thermalization to the Gibbs state\footnote{By contrast, for generic non-integrable systems, equilibrium behaviour is often expected to be governed by the Eigenstate Thermalization Hypothesis (ETH) (see e.g.~\cite{Srednicki_1994, Deutsch1991}), according to which local observables thermalize to ordinary Gibbs states determined solely by the energy density.}. Typical examples include quantum spin chains such as the XXZ model, whose conserved quantities contain the Hamiltonian, total magnetization, and higher-order multi-site observables. As a consequence of these additional conservation laws, such systems generally equilibrate not to ordinary Gibbs states, but rather to generalized Gibbs ensembles (GGEs) (see e.g.~\cite{VidmarRigol2016}) of the form
\bb
\gamma_{\boldsymbol\lambda}
=
\frac{
\exp\!\left(
-\lambda_0 H-\sum_j \lambda_j Q_j
\right)
}{
\Tr\!\left[
\exp\!\left(
-\lambda_0 H-\sum_j \lambda_j Q_j
\right)
\right]
},
\ee
where \(\boldsymbol\lambda=(\lambda_0,\lambda_1,\ldots,\lambda_m)\) is a collection of fixed real parameters\footnote{The parameters \(\lambda_j\) arise as Lagrange multipliers enforcing fixed expectation values of the conserved quantities \(H,Q_1,\ldots,Q_m\) in the maximum-entropy principle underlying generalized Gibbs ensembles. In particular, \(\lambda_0\) plays the role of the inverse temperature associated with the Hamiltonian \(H\), while the remaining parameters may be interpreted as generalized chemical potentials corresponding to the additional conserved quantities.}.

{We now specialize to the setting in which the conserved quantities
\(Q_1,Q_2,\ldots\) are one-site observables, and consider quantum sources
\(\bm\rho=(\rho_n)_n\) that are weakly almost i.i.d.\ along a fixed GGE state
\(\gamma_{\boldsymbol\lambda}\).}
The following corollary shows that such sources exhibit the same thermodynamic concentration behaviour, at the level of empirical observables, as the corresponding i.i.d.\ tensor-power source
\(
(\gamma_{\boldsymbol\lambda}^{\otimes n})_n.
\)}

\begin{cor}[(GGE typicality)]
\label{cor:gge-typicality}
Let $\bm\rho=(\rho_n)_n$ be weakly almost i.i.d.\ along $\gamma_{\boldsymbol\lambda}$. Then, for every $\delta>0$, the empirical observables
$$\overline H_n:=\frac{1}{n}\sum_i H^{(i)} \quad \text{and} \quad
\overline Q_{\ell,n}:=\frac{1}{n}\sum_i Q_\ell^{(i)}$$
simultaneously concentrate around their GGE expectation values:
\bb
\Tr\!\left[
\rho_n
\left\{
|\overline H_n-\Tr(\gamma_{\boldsymbol\lambda}H)\id_n|\le\delta
\right\}
\prod_{\ell=1}^m
\left\{
|\overline Q_{\ell,n}-\Tr(\gamma_{\boldsymbol\lambda}Q_\ell)\id_n|\le\delta
\right\}
\right]
\to1 \quad \text{as} \quad n \to \infty.
\ee
\end{cor}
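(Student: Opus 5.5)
The plan is to obtain the corollary as a direct specialization of Theorem~\ref{thm:thermodynamic-typicality}, taking the reference state to be $\rho:=\gamma_{\boldsymbol\lambda}$ and the finite commuting family to be $A_1:=H$, $A_{j+1}:=Q_j$ for $j=1,\ldots,m$. All of these are one-site self-adjoint observables on $\mathcal H$, and $\dim\mathcal H<\infty$, so they are bounded, which is all Lemma~\ref{lem:nc-weak-aiid-lln} needs. The expectation values are then $a_1=\Tr(\gamma_{\boldsymbol\lambda}H)$ and $a_{j+1}=\Tr(\gamma_{\boldsymbol\lambda}Q_j)$. The empirical observables $\overline A_{j,n}$ coincide with $\overline H_n$ and $\overline Q_{\ell,n}$, and the projections $P_{j,n}$ are exactly the spectral projections appearing in the statement.

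The one hypothesis to check is mutual commutativity. In the GGE setting this is part of the standing assumption that $H,Q_1,\ldots,Q_m$ form a mutually commuting family of conserved quantities, here all one-site. Once this holds, the one-site lifts $H^{(i)}$ and $Q_\ell^{(i)}$ also commute, for equal $i$ and trivially for different $i$. Hence the empirical averages commute, and so do their spectral projections. Theorem~\ref{thm:thermodynamic-typicality} then gives
\bb
\Tr\!\left[\rho_n\prod_{j=1}^{m+1}P_{j,n}\right]\to1,
\ee
which is exactly the claimed statement. The order of factors in the product is irrelevant because the projections commute.

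The only potential obstacle is this commutativity: if it were dropped, the product of projections would not even be a positive operator. I would state it explicitly as inherited from the GGE setup. As a fallback that avoids it entirely, I would argue as follows. Lemma~\ref{lem:nc-weak-aiid-lln} gives $\Tr[\rho_n(\id_n-P_{j,n})]\to0$ for each observable separately. By the gentle operator lemma, $\|P_{j,n}\rho_nP_{j,n}-\rho_n\|_1\to0$ for each $j$. Telescoping these bounds shows that the sandwiched product still captures asymptotically full weight of $\rho_n$.
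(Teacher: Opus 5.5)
Your proposal is correct and matches the paper's proof: the paper also obtains the corollary by applying Theorem~\ref{thm:thermodynamic-typicality} with reference state $\rho=\gamma_{\boldsymbol\lambda}$ and the mutually commuting one-site observables $H,Q_1,\ldots,Q_m$, with commutativity taken as part of the GGE setup. Your fallback for non-commuting families is not needed here, though the telescoping gentle-operator estimate you sketch would indeed work.
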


\begin{proof}
Apply Theorem~\ref{thm:thermodynamic-typicality} with reference state $\rho=\gamma_{\boldsymbol\lambda}$ and commuting observables $H,Q_1,\ldots,$
$ \ldots, Q_m$.
\end{proof}

Thus, even in the presence of arbitrary long-range correlations and multipartite
entanglement, weakly almost i.i.d.\ states along a GGE reproduce the same
macroscopic equilibrium behaviour as the corresponding tensor-power GGE state.
{
\subsection{Measurement statistics of weakly almost i.i.d.\ sources}\label{sec:mmt}

{We next show that weakly almost i.i.d.\ structure is reflected directly in the statistics of local measurements. In particular, empirical outcome frequencies obtained from repeated local measurements concentrate around the probabilities predicted by the reference state, just as in the genuinely i.i.d.\ setting.}

\begin{thm}[(Concentration of empirical measurement frequencies)]
\label{thm:measurement-statistics-concentration}
Let $\rho\in\mathcal D(\mathcal H)$, with $\dim\mathcal H<\infty$, and let
$\bm\rho=(\rho_n)_n$, with
$\rho_n\in\mathcal D(\mathcal H^{\otimes n})$, be weakly almost i.i.d.\ along $\rho$.
Let $\mathsf M=\{M_x\}_{x\in\mathcal X}$ be a POVM on $\mathcal H$, where $\mathcal X$ is finite, and define
$p(x):=\Tr(\rho M_x)$.
Measuring \(\mathsf M^{\otimes n}\) on \(\rho_n\) induces a probability distribution\(p_n\) on \(\mathcal X^n\). Let \(\mathbb P_n\) denote probabilities with respect to \(p_n\). If \((X_1,\ldots,X_n)\sim p_n\), define the empirical frequencies
\bb
\hat p_n(x):=\frac1n\sum_{i=1}^n \mathbf 1_{\{X_i=x\}},\qquad x\in\mathcal X.
\ee
Then, for every \(\delta>0\),\bb{\mathbb P}_n\!\left(\max_{x\in\mathcal X}|\hat p_n(x)-p(x)|>\delta\right)\to0\quad\text{as}\quad n\to\infty.\ee

\end{thm}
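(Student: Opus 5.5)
The plan is to reduce the statement to a classical second-moment computation for the random variables $\mathbf 1_{\{X_i=x\}}$, using only the $k=1$ and $k=2$ cases of the weakly almost i.i.d.\ condition~\eqref{eq:waiid}. This mirrors the proof of Lemma~\ref{lem:nc-weak-aiid-lln}, but is carried out directly at the level of the outcome distribution $p_n$, since the POVM elements $M_x$ need not be projections. Then a union bound over the finite set $\mathcal X$ will finish the argument.

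\textbf{Step 1 (marginals of $p_n$).} Fix $x\in\mathcal X$. For each $i$, $\mathbb P_n(X_i=x)=\Tr[\rho_n M_x^{(i)}]=\Tr[(\rho_n)_i M_x]$. For $i\neq j$, $\mathbb P_n(X_i=x,X_j=x)=\Tr[(\rho_n)_{ij}(M_x\otimes M_x)]$. These identities hold because the remaining tensor factors of $\mathsf M^{\otimes n}$ sum to the identity. Since $0\le M_x\le\id$, we have $\|M_x\|_\infty\le1$ and $\|M_x\otimes M_x\|_\infty\le 1$, so
\[
\bigl|\mathbb E_n\hat p_n(x)-p(x)\bigr|\le \frac1n\sum_i\|(\rho_n)_i-\rho\|_1\to0,
\]
and
\[
\Bigl|\frac{1}{\binom n2}\sum_{i<j}\mathbb P_n(X_i=X_j=x)-p(x)^2\Bigr|\le \mathbb E_{|I|=2}\|(\rho_n)_I-\rho^{\otimes2}\|_1\to0.
\]

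\textbf{Step 2 (variance).} Expand
\[
\mathbb E_n[\hat p_n(x)^2]=\frac1{n^2}\sum_i\mathbb P_n(X_i=x)+\frac{2}{n^2}\sum_{i<j}\mathbb P_n(X_i=X_j=x).
\]
The diagonal term is at most $1/n$. By Step 1, the off-diagonal term equals $\frac{n(n-1)}{n^2}(p(x)^2+o(1))$. Hence $\mathbb E_n[\hat p_n(x)^2]\to p(x)^2$. Combined with $\mathbb E_n\hat p_n(x)\to p(x)$, this gives
\[
\mathbb E_n[(\hat p_n(x)-p(x))^2]=\mathbb E_n[\hat p_n(x)^2]-2p(x)\,\mathbb E_n\hat p_n(x)+p(x)^2\to0.
\]

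\textbf{Step 3 (Chebyshev and union bound).} By Chebyshev's inequality, $\mathbb P_n(|\hat p_n(x)-p(x)|>\delta)\le\delta^{-2}\,\mathbb E_n[(\hat p_n(x)-p(x))^2]\to0$ for each $x$. Since $\mathcal X$ is finite,
\[
\mathbb P_n\Bigl(\max_x|\hat p_n(x)-p(x)|>\delta\Bigr)\le\sum_{x\in\mathcal X}\mathbb P_n(|\hat p_n(x)-p(x)|>\delta)\to0.
\]

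An alternative route would apply Lemma~\ref{lem:nc-weak-aiid-lln} to a Naimark dilation. The direct computation above avoids that, because the lemma is stated for observables on $\rho_n$ and its spectral projections do not directly encode the classical outcome events when the $M_x$ do not commute.

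There is no genuine obstacle. The only point needing care is Step 1, where one must check that marginalizing the product measurement over the other sites really yields the reduced states $(\rho_n)_i$ and $(\rho_n)_{ij}$, and that the factor $n(n-1)/n^2$ is handled correctly.
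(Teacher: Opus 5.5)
Your proof is correct and follows essentially the paper's route: a second-moment bound for each indicator average $\hat p_n(x)$ using only the $k=1,2$ cases of the weakly almost i.i.d.\ condition, then Chebyshev's inequality and a union bound over the finite set $\mathcal X$. The only cosmetic difference is the transfer step. The paper first uses trace-norm contractivity of the measurement channel to show that $(p_n)_n$ is classically weakly almost i.i.d.\ along $p$, whereas you bound $\bigl|\Tr[((\rho_n)_I-\rho^{\otimes k})(M_x^{\otimes k})]\bigr|\le\|(\rho_n)_I-\rho^{\otimes k}\|_1$ directly via $\|M_x\|_\infty\le 1$, which is the same estimate.
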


\begin{remark}
Thus, although weakly almost i.i.d.\ sources may exhibit strong global correlations, the empirical statistics produced by repeated local measurements with the POVM \(\mathsf M\) are asymptotically indistinguishable from those obtained by independently measuring the reference state \(\rho\) with the same POVM. Equivalently, weakly almost i.i.d.\ structure is operationally stable under any fixed local measurement: the empirical measurement statistics obey the same asymptotic law of large numbers as genuinely independent samples. In particular, such correlations are asymptotically invisible to standard repeated local-measurement and tomography protocols based on empirical outcome frequencies.
\end{remark}

{\begin{proof}
Let $\mathcal M$ denote the CPTP measurement map associated with the POVM $M$\footnote{That is, for any $\omega \in \mathcal D(\mathcal H), \,\,$ 
\(
\mathcal M(\omega)
=
\sum_{x\in\mathcal X}
\Tr(\omega M_x)\, |x\rangle\!\langle x|.
\)
}. For every fixed \(k\) and every subset
\(I\subseteq[n]\) with \(|I|=k\), the marginal \((p_n)_I\) is obtained by applying \(\mathcal M^{\otimes k}\) to \((\rho_n)_I\). Hence, by contractivity of the trace norm,
\bb
\|(p_n)_I-p^{\otimes k}\|_1
\le
\|(\rho_n)_I-\rho^{\otimes k}\|_1 .
\ee
Averaging over all subsets \(I\subseteq[n]\) of size \(k\) and using the weakly almost i.i.d.\ property of \((\rho_n)_n\), it follows that \((p_n)_n\) satisfies the classical weakly almost i.i.d.\ condition along \(p\).

For each \(x\in\mathcal X\), define
\bb
f_x(y):=\mathbf 1_{\{y=x\}} .
\ee
Its expectation under \(p\) is \(p(x)\), and its empirical average under \(p_n\) is
\bb
\frac1n\sum_{i=1}^n f_x(X_i)
=
\hat p_n(x).
\ee
The same second-moment argument as in the proof of Lemma~\ref{lem:nc-weak-aiid-lln} then gives
\bb
\mathbb E_{p_n}
\bigl[
(\hat p_n(x)-p(x))^2
\bigr]
\to0
 \quad \text{as} \quad n \to \infty.
\ee
Hence, by Chebyshev's inequality,
\bb
{\mathbb P}_n\!\left(
|\hat p_n(x)-p(x)|>\delta
\right)\to0
\quad \text{as} \quad n \to \infty
\ee
for every \(x\in\mathcal X\) and every \(\delta>0\). Since \(\mathcal X\) is finite, the union bound yields
\bb
{\mathbb P}_n\!\left(
\max_{x\in\mathcal X}
|\hat p_n(x)-p(x)|
>\delta
\right)
\le
\sum_{x\in\mathcal X}
{\mathbb P}_n\!\left(
|\hat p_n(x)-p(x)|>\delta
\right)
\to0 \,\, \text{as}\,\,n\to \infty.
\ee
\end{proof}}

\subsection{Smooth zero-Rényi entropy bounds}\label{sec:smooth}

Smooth entropy quantities~(see e.g.~\cite{Renner2005, Tomamichel2016, Konig_2009}) play a central role in one-shot quantum information theory,
where they characterize operational tasks such as data compression, randomness extraction,
and hypothesis testing in the non-asymptotic regime. In the asymptotic i.i.d.\ setting,
smooth entropies converge to the von Neumann entropy and related Shannon-theoretic
quantities. We now show that the universal entropy-concentration principle of
Lemma~\ref{lem:universal-typical-projector} directly yields an asymptotic upper bound on the smooth zero-Rényi entropy for weakly almost i.i.d.\ sources.

\begin{lemma}[Smooth zero-Rényi entropy bound]
\label{lem:smooth-zero-bound}
Let $\bm\rho=(\rho_n)_n$ be weakly almost i.i.d.\ along a state
$\rho\in\mathcal D(\mathcal H)$, where $\dim\mathcal H<\infty$.
For $\varepsilon\in(0,1)$, define
\bb
H_0^\varepsilon(\rho_n)
:=
\inf\left\{
\log(\Tr P):\ 
P \text{ is a projection and }
\Tr(P\rho_n)\ge 1-\varepsilon
\right\}.
\ee
Then
\bb
\limsup_{n\to\infty}
\frac1n
H_0^\varepsilon(\rho_n)
\le
S(\rho).
\ee
\end{lemma}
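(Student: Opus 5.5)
The plan is to use the universal typical projectors of Lemma~\ref{lem:universal-typical-projector} as feasible candidates in the infimum that defines $H_0^\varepsilon(\rho_n)$. Fix $\eta>0$. Since $h_q\to S(\rho)$ as $q\to0$, first choose $q\in(0,1)$ with $h_q< S(\rho)+\eta/2$, and then choose $\delta=\eta/2$. This gives the projections
\[
\Pi_n=\Bigl\{-\tfrac1n\log\sigma_q^{\otimes n}\le h_q+\delta\Bigr\},\qquad \sigma_q=(1-q)\rho+q\id/d .
\]
By Lemma~\ref{lem:universal-typical-projector} they satisfy $\Tr(\Pi_n\rho_n)\to1$ and $\Tr\Pi_n\le e^{n(h_q+\delta)}$ for all $n$.

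Next, feasibility. Because $\Tr(\Pi_n\rho_n)\to1$, there is $N$ such that $\Tr(\Pi_n\rho_n)\ge1-\varepsilon$ for all $n\ge N$. For these $n$, $\Pi_n$ is admissible in the infimum, so
\[
H_0^\varepsilon(\rho_n)\le\log\Tr\Pi_n\le n(h_q+\delta)<n(S(\rho)+\eta).
\]
Dividing by $n$ and taking $\limsup$ gives $\limsup_n\frac1n H_0^\varepsilon(\rho_n)\le S(\rho)+\eta$. Since $\eta>0$ was arbitrary, the claim follows.

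There is essentially no obstacle beyond ordering the choices correctly. The parameters $q$ and $\delta$ must be fixed before taking $n\to\infty$, so that the lemma applies with constant parameters, and only then is $\eta\downarrow0$ sent to zero. Note also that $\Tr\Pi_n\ge1$ eventually, because $\Pi_n\neq0$ once $\Tr(\Pi_n\rho_n)>0$. Hence the logarithm is finite, and the bound holds for every fixed $\varepsilon\in(0,1)$.
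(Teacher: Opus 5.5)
Your proof is correct and follows the paper's argument essentially verbatim. The paper likewise fixes $R>S(\rho)$ (your $S(\rho)+\eta$), chooses $q,\delta$ with $h_q+\delta<R$, uses the projectors $\Pi_n$ of Lemma~\ref{lem:universal-typical-projector} as admissible candidates for all sufficiently large $n$, and concludes from $H_0^\varepsilon(\rho_n)\le\log\Tr\Pi_n<nR$ before letting $R\downarrow S(\rho)$; the only differences are your $\eta$-parametrization and your harmless extra remark that $\Tr\Pi_n\ge1$ eventually.
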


\begin{remark}
The quantity \(H_0^\varepsilon(\rho_n)\) measures the logarithmic dimension of the
smallest subspace capturing at least \(1-\varepsilon\) of the mass of \(\rho_n\). It therefore has a direct operational interpretation in one-shot quantum information theory, where it characterizes the size of approximate compression subspaces required for one-shot quantum compression of the state \(\rho_n\).
Lemma~\ref{lem:smooth-zero-bound} shows that every weakly almost i.i.d.\ source
asymptotically concentrates on subspaces whose exponential dimension is governed asymptotically by the entropy of the reference state \(\rho\),
just as in the genuinely i.i.d.\ setting.

For genuinely i.i.d.\ sources, the quantum asymptotic equipartition property (AEP)
establishes that suitably normalized smooth entropy quantities converge to the von
Neumann entropy in the asymptotic limit; see e.g.~\cite{Tomamichel_2009}. By contrast,
Lemma~\ref{lem:smooth-zero-bound} applies to the substantially broader class of
weakly almost i.i.d.\ sources, which may exhibit arbitrary long-range correlations and
entanglement. In this setting one cannot expect asymptotic equality in general:
globally pure weakly almost i.i.d.\ sources along the maximally mixed state provide
counterexamples. The lemma should therefore be viewed as an entropy-concentration
upper bound rather than a full asymptotic equipartition property.
\end{remark}

\begin{proof}
{Fix \(R>S(\rho)\). Since \(h_q\to S(\rho)\) as \(q\to0\),
we may choose \(q\in(0,1)\) and \(\delta>0\) such that
\(h_q+\delta<R\),
where
\bb
\sigma_q:=(1-q)\rho+q\frac{\id}{d},
\qquad
h_q:=-\Tr(\rho\log\sigma_q).
\ee}
For these choices of $q$ and $\delta$, Lemma~\ref{lem:universal-typical-projector}
provides projectors
\bb
\Pi_n
=
\left\{
-\frac1n\log\sigma_q^{\otimes n}
\le
h_q+\delta
\right\}
\ee
satisfying
\bb
\Tr(\Pi_n\rho_n)\to1
\qquad\text{as}\qquad n\to\infty,
\ee
together with the rank bound
\bb
\Tr\Pi_n
\le
e^{n(h_q+\delta)}
<
e^{nR}
\ee
for every sufficiently large $n$.

Now fix $\varepsilon\in(0,1)$. Since
\(
\Tr(\Pi_n\rho_n)\to1,\) as $n \to \infty$,
there exists $n_0$ such that
\bb
\Tr(\Pi_n\rho_n)\ge1-\varepsilon
\qquad
\forall\, n\ge n_0 .
\ee
Hence, for all sufficiently large $n$, the projector $\Pi_n$ is admissible in the definition
of $H_0^\varepsilon(\rho_n)$. Therefore,
\bb
H_0^\varepsilon(\rho_n)
\le
\log\Tr\Pi_n
<
nR
\ee
for all sufficiently large $n$. Dividing by $n$ and taking the limit superior yields
\bb
\limsup_{n\to\infty}
\frac1n
H_0^\varepsilon(\rho_n)
\le
R.
\ee
Since $R>S(\rho)$ was arbitrary, the result follows.
\end{proof}

{\subsection{Spectral entropy rates and entropy concentration}\label{sec:info-spec}

The information-spectrum approach, initiated in the classical setting by Han and Verdú
and subsequently extended to quantum information theory by Hayashi and Nagaoka~\cite{HayashiNagaoka2003} (see also, e.g.~\cite{Hayashi2017} and references therein),
provides a powerful framework for studying information-theoretic tasks beyond the
i.i.d.\ setting. Rather than relying on asymptotic equipartition properties or tensor-product
structure, the information-spectrum method applies to completely general sequences of
states and channels, possibly exhibiting arbitrary correlations and non-stationary behaviour.

Within this framework, spectral entropy rates play a central role. In particular, the
spectral sup-entropy rate admits an operational interpretation as the optimal asymptotic
quantum compression rate for arbitrary quantum sources~\cite{BowenDatta2006}. Thus spectral entropy rates generalize the role
played by the von Neumann entropy in Schumacher compression to fully non-i.i.d.\ settings.

We now show that the universal concentration projectors of Lemma~\ref{lem:universal-typical-projector}
directly yield bounds on spectral entropy quantities for weakly almost i.i.d.\ sources.
The key observation is that the projectors produced by Lemma~\ref{lem:universal-typical-projector}
already satisfy the asymptotic concentration and dimension-growth conditions appearing
in the projector characterization of the spectral sup-entropy rate.}

We use the definition of the spectral sup-entropy rate introduced in~\cite{BowenDattaISIT2006}, which, as shown in that paper, is equivalent to the definition given in~\cite{HayashiNagaoka2003}. 

\begin{Def}[(Spectral sup-entropy rate)]
Let $\bm\omega=(\omega_n)_n$, with
$\omega_n\in\mathcal D(\mathcal H^{\otimes n})$.
For $\gamma\in\mathbb R$, define
\bb
A_n(\gamma)
:=
\omega_n-e^{-n\gamma}\id_n,
\ee
where \(\id_n\) denotes the identity operator on \(\mathcal H^{\otimes n}\).
The spectral sup-entropy rate of $\bm\omega$ is
\bb
\overline S(\bm\omega)
:=
\inf\left\{
\gamma:
\liminf_{n\to\infty}
\Tr\!\left[
\{A_n(\gamma)\ge0\}A_n(\gamma)
\right]
=1
\right\}.
\ee
\end{Def}

The spectral sup-entropy rate admits the following projector characterization. We include its proof for completeness.

\begin{prop}[(Projector characterization of the spectral sup-entropy rate)]
\label{prop:projector-characterization}
For every sequence $\bm\omega=(\omega_n)_n$,
\bb
\overline S(\bm\omega)
=
\inf\left\{
R:
\exists\ \text{projections }P_n
\text{ such that }
\Tr(P_n\omega_n)\to1 \text{\,as } n \to \infty\,;\,\,
\limsup_{n\to\infty}
\frac1n\log\Tr P_n
\le R
\right\}.
\ee
\end{prop}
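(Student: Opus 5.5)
Let $R^*$ denote the right-hand side infimum. We prove the two inequalities $R^*\le\overline S(\bm\omega)$ and $\overline S(\bm\omega)\le R^*$ separately. The basic tool in both directions is the elementary identity, valid for any projection $P$ and any $\gamma$,
\bb
\Tr[P A_n(\gamma)]=\Tr(P\omega_n)-e^{-n\gamma}\Tr P,
\ee
together with the fact that $\{A_n(\gamma)\ge0\}$ maximizes $\Tr[PA_n(\gamma)]$ over all projections $P$. The latter holds because, for any projection $P$, $\Tr[PA_n(\gamma)]\le\Tr[PA_n(\gamma)_+]\le\Tr A_n(\gamma)_+$.

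\textbf{Step 1: $R^*\le\overline S(\bm\omega)$.} Take $\gamma>\overline S(\bm\omega)$ admissible, i.e.\ with $\liminf_n\Tr[P_nA_n(\gamma)]=1$ for $P_n:=\{A_n(\gamma)\ge0\}$. (If the admissible set is upward closed we can take any $\gamma$ above the infimum; this follows from monotonicity of $\Tr A_n(\gamma)_+$ in $\gamma$.)

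From the identity, $\Tr(P_n\omega_n)\ge\Tr[P_nA_n(\gamma)]$. Hence $\liminf_n\Tr(P_n\omega_n)\ge1$, and so $\Tr(P_n\omega_n)\to1$.

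For the dimension bound, note that on $\Ran P_n$ we have $\omega_n\ge e^{-n\gamma}\id_n$. Therefore $1\ge\Tr(P_n\omega_n)\ge e^{-n\gamma}\Tr P_n$, giving $\Tr P_n\le e^{n\gamma}$. So $\gamma$ is feasible for $R^*$, and letting $\gamma\downarrow\overline S(\bm\omega)$ gives $R^*\le\overline S(\bm\omega)$.

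\textbf{Step 2: $\overline S(\bm\omega)\le R^*$.} Take $R$ feasible with projections $P_n$, and fix any $\gamma>R$. Then $\Tr P_n\le e^{n(R+\eta)}$ for large $n$, where $\eta>0$ is chosen with $R+\eta<\gamma$. By the maximality property and the identity,
\bb
\Tr[\{A_n(\gamma)\ge0\}A_n(\gamma)]\ge\Tr[P_nA_n(\gamma)]\ge\Tr(P_n\omega_n)-e^{-n(\gamma-R-\eta)}\to1.
\ee
For the upper bound, $\Tr[\{A_n(\gamma)\ge 0\}A_n(\gamma)]\le\Tr\omega_n=1$. Hence the $\liminf$ equals $1$, so $\overline S(\bm\omega)\le\gamma$. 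Letting $\gamma\downarrow R$ and then taking the infimum over feasible $R$ gives $\overline S(\bm\omega)\le R^*$.

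\textbf{Main obstacle.} The only delicate points are bookkeeping. First, the maximization property of the positive spectral projector must be applied carefully: since $A_n(\gamma)$ is self-adjoint, $P A_n(\gamma)_- P\ge0$, which is what gives $\Tr[PA_n(\gamma)]\le\Tr[PA_n(\gamma)_+]$. Second, both infima must be handled via approximating parameters, including the monotonicity in $\gamma$ that makes the admissible set an upper ray. Neither step involves real difficulty.
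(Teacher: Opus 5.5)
Your proof is correct and follows the paper's argument essentially step for step. In one direction the spectral projector $\{\omega_n\ge e^{-n\gamma}\id_n\}$ is shown to be feasible through $1\ge\Tr(P_n\omega_n)\ge e^{-n\gamma}\Tr P_n$; in the other, a feasible family $P_n$ is compared with the positive spectral projector of $A_n(\gamma)$ via its maximality. The only cosmetic difference is that you derive this maximality directly from $A=A_+-A_-$ (and remark on upward-closedness of the admissible $\gamma$'s), whereas the paper cites the Ky Fan maximum principle.
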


\begin{proof}
Let \(R\) be such that
$$
\liminf_{n\to\infty}
\Tr[\{\omega_n\ge e^{-nR}\id_n\}
(\omega_n-e^{-nR}\id_n)]
=1.
$$
Define \(P_n:=\{\omega_n\ge e^{-nR}\id_n\}\).
{By the definition of \(\overline S(\bm\omega)\),
\bb
\liminf_{n\to\infty}
\Tr\!\left[
P_n(\omega_n-e^{-nR}\id_n)
\right]
=1.
\ee
Since
$
\Tr\!\left[
P_n(\omega_n-e^{-nR}\id_n)
\right]
\le
\Tr(P_n\omega_n)
\le 1
$
for every \(n\), it follows that
\bb
\Tr\!\left[
P_n(\omega_n-e^{-nR}\id_n)
\right]
\to1
\quad\text{as }\quad  n\to\infty.
\ee}
As
$
\Tr(P_n\omega_n)
\ge
\Tr\!\left[
P_n(\omega_n-e^{-nR}\id_n)
\right]
$
and \(\Tr(P_n\omega_n)\le1\), it follows that
\bb
\Tr(P_n\omega_n)\to1 \quad \text{as} \quad n \to \infty.
\ee
Moreover, since every eigenvalue of \(\omega_n\) on \(\Ran P_n\) is at least
\(e^{-nR}\),
\bb
1\ge \Tr(P_n\omega_n)\ge e^{-nR}\Tr P_n,
\ee
and hence \(\Tr P_n\le e^{nR}\) for all $n \in {\mathbb{N}}$.

Conversely, suppose that  \(P_n\) are projections satisfying
\bb
\Tr(P_n\omega_n)\to1 \,\, \text{as} \,\, n \to \infty, 
\quad\text{and}\quad
\limsup_{n\to\infty}\frac1n\log\Tr P_n\le R.
\ee
Fix \(\eta>0\). Then \(\Tr P_n\le e^{n(R+\eta)}\) for all sufficiently large \(n\). Define
\bb
Q_n:=\{\omega_n\ge e^{-n(R+2\eta)}\id_n\}.
\ee
Since \(Q_n\) is the positive spectral projection of
\(
\omega_n-e^{-n(R+2\eta)}\id_n,
\)
the Ky Fan maximum principle (see e.g.~\cite{BhatiaMatrixAnalysis}) gives
\bb
\Tr\!\left[
Q_n(\omega_n-e^{-n(R+2\eta)}\id_n)
\right]
\ge
\Tr\!\left[
P_n(\omega_n-e^{-n(R+2\eta)}\id_n)
\right].
\ee
Therefore,
\bb\label{eq:rhs}
\Tr\!\left[
Q_n(\omega_n-e^{-n(R+2\eta)}\id_n)
\right]
\ge
\Tr(P_n\omega_n)
-
e^{-n(R+2\eta)}\Tr P_n
\ge
\Tr(P_n\omega_n)-e^{-n\eta}.
\ee
The right-hand side of \eqref{eq:rhs} tends to \(1\) as $n \to \infty$. Since the left-hand side is bounded above by
\(\Tr(Q_n\omega_n)\le1\), it follows that
\bb
\Tr\!\left[
Q_n(\omega_n-e^{-n(R+2\eta)}\id_n)
\right]\to1 \quad \text{as} \quad n \to \infty .
\ee
By the definition of the spectral sup-entropy rate, this implies
\bb
\overline S(\bm\omega)\le R+2\eta.
\ee
Letting \(\eta\to0\) gives the claim.
\end{proof}

\begin{thm}
\label{thm:spectral-sup-entropy}
Let $\bm\rho=(\rho_n)_n$, with
$\rho_n\in\mathcal D(\mathcal H^{\otimes n})$, be weakly almost i.i.d.\ along
$\rho\in\mathcal D(\mathcal H)$, where $\dim\mathcal H<\infty$.
Then
\bb
\overline S(\bm\rho)\le S(\rho).
\ee
\end{thm}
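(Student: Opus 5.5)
The plan is to combine the projector characterization of Proposition~\ref{prop:projector-characterization} with the universal typical projectors of Lemma~\ref{lem:universal-typical-projector}. By the proposition, it suffices to show the following: for every $R>S(\rho)$ there exist projections $P_n$ with $\Tr(P_n\rho_n)\to1$ and $\limsup_{n}\frac1n\log\Tr P_n\le R$. The infimum in the proposition is then at most every such $R$, and letting $R\downarrow S(\rho)$ gives $\overline S(\bm\rho)\le S(\rho)$.

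To construct these projections, fix $R>S(\rho)$. Since $h_q\to S(\rho)$ as $q\to0$ (last assertion of Lemma~\ref{lem:universal-typical-projector}), choose $q\in(0,1)$ and $\delta>0$ with $h_q+\delta<R$. Set
\[
P_n:=\Pi_n=\Bigl\{-\tfrac1n\log\sigma_q^{\otimes n}\le h_q+\delta\Bigr\},
\qquad \sigma_q=(1-q)\rho+q\id/d .
\]
Lemma~\ref{lem:universal-typical-projector} then gives $\Tr(\Pi_n\rho_n)\to1$ and $\Tr\Pi_n\le e^{n(h_q+\delta)}$ for every $n$. Hence $\frac1n\log\Tr\Pi_n\le h_q+\delta<R$ for all $n$, and $R$ belongs to the admissible set in Proposition~\ref{prop:projector-characterization}. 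This is the same argument as in the proof of Lemma~\ref{lem:smooth-zero-bound}, except that here we need the full convergence $\Tr(\Pi_n\rho_n)\to1$ rather than the threshold $1-\varepsilon$, and the lemma supplies exactly that.

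There is essentially no serious obstacle, since all the analytic work is already contained in the lemma (via the noncommutative law of large numbers, Lemma~\ref{lem:nc-weak-aiid-lln}) and in the proposition. The one point to check is that the projector characterization applies to an arbitrary sequence $\bm\omega$, with no i.i.d.\ or permutation structure required. The proposition is stated for every sequence, so it applies directly to $\bm\rho$.

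Finally, I would note that the inequality can be strict. For the Haar-random pure-state sources along $\id/d$ of Section~\ref{sec:waiid}, the projection $P_n=\rho_n$ has rank one, so $\overline S(\bm\rho)=0<\log d$. This matches the remark following Lemma~\ref{lem:smooth-zero-bound}, and with the compression interpretation of~\cite{BowenDatta2006} it also agrees with the individual-versus-universal discussion after Theorem~\ref{thm:univ-data-comp}.
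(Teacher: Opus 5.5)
Your proposal is correct and is essentially the paper's own proof: you feed the universal typical projectors $\Pi_n$ of Lemma~\ref{lem:universal-typical-projector} into the projector characterization of Proposition~\ref{prop:projector-characterization} to get $\overline S(\bm\rho)\le h_q+\delta$, and then let $q\to0$ and $\delta\to0$. The only difference is that you fix $R>S(\rho)$ first and choose $q,\delta$ accordingly, which is equivalent; your side remark that the Haar-random pure-state sources give $\overline S(\bm\rho)=0<\log d$ is also correct, though not needed for the statement.
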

\begin{remark}
{
 Thus, the universal entropy-concentration principle of Lemma~\ref{lem:universal-typical-projector} implies that the spectral sup-entropy rate of a weakly almost i.i.d.\ source is bounded above by the von Neumann entropy of the underlying reference state.}
\end{remark}
\begin{proof}
Fix $q\in(0,1)$ and $\delta>0$. By Lemma~\ref{lem:universal-typical-projector}, there exist projections $\Pi_n$ such that
$\Tr(\Pi_n\rho_n)\to1$ as $n \to \infty$, and
$$\limsup_{n\to\infty}\frac{1}{n}\log\Tr\Pi_n\le h_q+\delta,$$
where
$\sigma_q:=(1-q)\rho+q\id/d$
and
$h_q:=-\Tr(\rho\log\sigma_q)$.
Proposition~\ref{prop:projector-characterization} therefore yields
$\overline S(\bm\rho)\le h_q+\delta$.
Since $h_q\to S(\rho)$ as $q\to0$, and $\delta>0$ was arbitrary, the result follows. Importantly, the projectors entering the proof depend only on the reference state \(\rho\), and not on the individual source \((\rho_n)_n\).
\end{proof}

\section{Conclusions and open questions}

The results of this paper show that several characteristic features of the i.i.d.\ regime persist under the weakest of the currently studied forms of approximate i.i.d.\ structure. We identified two fundamental concentration principles for weakly almost i.i.d.\ quantum sources: a noncommutative weak law of large numbers for empirical observables, and a universal entropy-concentration principle yielding asymptotic concentration on subspaces whose exponential dimension is governed by the von Neumann entropy of the reference state. Together, these principles provide a unified and conceptually transparent framework for a range of information-theoretic applications beyond the tensor-power setting, including universal quantum data compression within classes of weakly almost i.i.d.\ sources sharing a common reference state, asymmetric quantum hypothesis testing, concentration results for quantum many-body systems and generalized Gibbs ensembles, concentration of repeated local measurement statistics, as well as bounds on smooth and spectral entropy quantities. In several cases, the resulting proofs are substantially more direct than the previous robustness-based approach of~[9], with the relevant operational statements emerging naturally from the underlying concentration principles.

A number of natural open questions remain. One important direction is to
extend the present framework to more general quantum information-theoretic
tasks, such as quantum communication, entanglement distillation, dense
coding, and decoupling-based protocols for weakly almost i.i.d.\ sources.
It would also be interesting to investigate possible second-order and
moderate-deviation refinements of the present results. On the many-body
side, an important problem is to extend the thermodynamic concentration
results beyond one-site observables to more general local or quasi-local
conserved quantities.

\subsection*{Acknowledgments}
The author is grateful to Filippo Girardi for introducing her to the notion of weakly almost i.i.d.~sources, and for many interesting discussions. She would also like to thank Bjarne Bergh for an insightful comment. ND is supported by the Engineering and Physical Sciences Research Council [Grant Ref: EP/Y028732/1]. 

\bibliography{biblio}

\section*{Appendix A: Existence of weakly almost i.i.d.\ pure-state sequences}\label{app}

For each \(n\in\mathbb N\), let
$
\rho_n=\ket{\psi_n}\!\bra{\psi_n},
$
where the vectors \(\ket{\psi_n}\in(\mathbb C^d)^{\otimes n}\) are chosen
according to the product Haar measure. Fix \(k\in\mathbb N_+\). For \(n\ge k\), define
\bb
X_{n,k}
:=
\mathbb E_{\substack{I\subseteq[n]\\ |I|=k}}
\left\|
(\rho_n)_I-\left(\frac{I}{d}\right)^{\otimes k}
\right\|_1 .
\ee

This is a nonnegative random variable.

From~\eqref{eq:haar-est},
\bb
\mathbb E_{\psi_n} X_{n,k}
\le
\sqrt{
d^k
\left(
\frac{d^k+d^{n-k}}{d^n+1}
-
\frac1{d^k}
\right)
}.
\ee
Moreover,
\bb
d^k
\left(
\frac{d^k+d^{n-k}}{d^n+1}
-
\frac1{d^k}
\right)
=
\frac{d^{2k}-1}{d^n+1}.
\ee
Hence
\bb
\mathbb E_{\psi_n} X_{n,k}
\le
\sqrt{
\frac{d^{2k}-1}{d^n+1}
}.
\ee
For fixed \(k\), there exists a constant \(C_k>0\) such that
\bb
\mathbb E_{\psi_n} X_{n,k}
\le
C_k d^{-n/2}.
\ee
Hence,
\bb
\sum_{n=k}^\infty
\mathbb E_{\psi_n} X_{n,k}
<
\infty.
\ee
Now fix \(\varepsilon>0\). By Markov's inequality,
\bb
\mathbb P(X_{n,k}>\varepsilon)
\le
\frac{\mathbb E_{\psi_n} X_{n,k}}{\varepsilon}.
\ee
Hence
\bb
\sum_{n=k}^\infty
\mathbb P(X_{n,k}>\varepsilon)
<
\infty.
\ee
By the first Borel--Cantelli lemma,
\bb
\mathbb P\bigl(
X_{n,k}>\varepsilon
\text{ infinitely often}
\bigr)=0.
\ee
Equivalently, with probability one, there exists
\(N_\varepsilon\in\mathbb N\) such that
\bb
X_{n,k}\le\varepsilon
\qquad
\forall n\ge N_\varepsilon.
\ee
Applying this to \(\varepsilon=1/m\), \(m\in\mathbb N\), and intersecting
the corresponding probability-one events, we conclude that
\bb
X_{n,k}\longrightarrow0
\quad {\hbox{as}} \quad
n\to\infty
\ee
almost surely.

Finally, since \(\mathbb N_+\) is countable, we may intersect the above
probability-one events over all \(k\in\mathbb N_+\). Therefore, with
probability one,
\bb
\mathbb E_{\substack{I\subseteq[n]\\ |I|=k}}
\left\|
(\rho_n)_I-\left(\frac{I}{d}\right)^{\otimes k}
\right\|_1
\longrightarrow0 \quad {\hbox{as}} \quad
n\to\infty
\ee
for every fixed \(k\in\mathbb N_+\).
\end{document}

%% file: macros.tex
\usepackage{newpxtext,newpxmath}

\usepackage[utf8]{inputenc}
\usepackage{amsthm}
\usepackage{amssymb}
\usepackage{amsmath}
\usepackage{bbold}
\usepackage{bbm}
\usepackage[pdftex, backref=page]{hyperref}
\usepackage{braket}
\usepackage{dsfont}
\usepackage{mathdots}
\usepackage{mathtools}
\usepackage{enumerate}
\usepackage[shortlabels]{enumitem}
\usepackage{csquotes}
\usepackage{stmaryrd}
\usepackage[cal=boondox]{mathalfa}
\usepackage{graphicx}
\usepackage{stackengine}
\usepackage{scalerel}
\usepackage{tensor}       
\usepackage{array}
\usepackage{makecell}
\newcolumntype{x}[1]{>{\centering\arraybackslash}p{#1}}
\usepackage{tikz}
\usepackage{pgfplots}
\usetikzlibrary{shapes.geometric, shapes.misc, positioning, arrows, arrows.meta, decorations.pathreplacing, decorations.pathmorphing, patterns, angles, quotes, calc}
\usepackage{booktabs}
\usepackage{xfrac}
\usepackage{siunitx}
\usepackage{centernot}
\usepackage{comment}
\usepackage{chngcntr}
\usepackage{caption}
\usepackage{subcaption}

\newtheorem{thm}{Theorem}
\newtheorem*{thm*}{Theorem}
\newtheorem{prop}[thm]{Proposition}
\newtheorem*{prop*}{Proposition}
\newtheorem{lemma}[thm]{Lemma}
\newtheorem*{lemma*}{Lemma}
\newtheorem{cor}[thm]{Corollary}
\newtheorem*{cor*}{Corollary}

\newtheorem*{cj*}{Conjecture}
\newtheorem{Def}[thm]{Definition}
\newtheorem*{Def*}{Definition}

\newtheorem*{question*}{Question}

\newtheorem*{problem*}{Problem}
\newtheorem*{example*}{Example}

\makeatletter
\def\thmhead@plain#1#2#3{%
  \thmname{#1}\thmnumber{\@ifnotempty{#1}{ }\@upn{#2}}%
  \thmnote{ {\the\thm@notefont#3}}}
\let\thmhead\thmhead@plain
\makeatother

\theoremstyle{definition}

\newcommand{\bb}{\begin{equation}\begin{aligned}\hspace{0pt}}
\newcommand{\bbb}{\begin{equation*}\begin{aligned}}
\newcommand{\ee}{\end{aligned}\end{equation}}
\newcommand{\eee}{\end{aligned}\end{equation*}}

\newcommand{\ketbra}[1]{\ket{#1}\!\!\bra{#1}}

\renewcommand{\epsilon}{\varepsilon}

\newcommand{\id}{\mathds{1}}

\DeclareMathOperator{\Tr}{Tr}

\DeclareMathAlphabet{\pazocal}{OMS}{zplm}{m}{n}

\DeclareMathOperator{\supp}{supp}

\newcommand{\lsmatrix}{\left(\begin{smallmatrix}}
\newcommand{\rsmatrix}{\end{smallmatrix}\right)}

\stackMath

\stackMath

\makeatletter
\newcommand*\rel@kern[1]{\kern#1\dimexpr\macc@kerna}
\newcommand*\widebar[1]{%
  \begingroup
  \def\mathaccent##1##2{%
    \rel@kern{0.8}%
    \overline{\rel@kern{-0.8}\macc@nucleus\rel@kern{0.2}}%
    \rel@kern{-0.2}%
  }%
  \macc@depth\@ne
  \let\math@bgroup\@empty \let\math@egroup\macc@set@skewchar
  \mathsurround\z@ \frozen@everymath{\mathgroup\macc@group\relax}%
  \macc@set@skewchar\relax
  \let\mathaccentV\macc@nested@a
  \macc@nested@a\relax111{#1}%
  \endgroup
}

\counterwithin*{equation}{part}
\counterwithin*{thm}{part}
\counterwithin*{figure}{part}

\tikzset{meter/.append style={draw, inner sep=10, rectangle, font=\vphantom{A}, minimum width=30, line width=.8, path picture={\draw[black] ([shift={(.1,.3)}]path picture bounding box.south west) to[bend left=50] ([shift={(-.1,.3)}]path picture bounding box.south east);\draw[black,-latex] ([shift={(0,.1)}]path picture bounding box.south) -- ([shift={(.3,-.1)}]path picture bounding box.north);}}}
\tikzset{roundnode/.append style={circle, draw=black, fill=gray!20, thick, minimum size=10mm}}
\tikzset{squarenode/.style={rectangle, draw=black, fill=none, thick, minimum size=10mm}}

\definecolor{Blues5seq1}{RGB}{239,243,255}
\definecolor{Blues5seq2}{RGB}{189,215,231}
\definecolor{Blues5seq3}{RGB}{107,174,214}
\definecolor{Blues5seq4}{RGB}{49,130,189}
\definecolor{Blues5seq5}{RGB}{8,81,156}

\definecolor{Greens5seq1}{RGB}{237,248,233}
\definecolor{Greens5seq2}{RGB}{186,228,179}
\definecolor{Greens5seq3}{RGB}{116,196,118}
\definecolor{Greens5seq4}{RGB}{49,163,84}
\definecolor{Greens5seq5}{RGB}{0,109,44}

\definecolor{Reds5seq1}{RGB}{254,229,217}
\definecolor{Reds5seq2}{RGB}{252,174,145}
\definecolor{Reds5seq3}{RGB}{251,106,74}
\definecolor{Reds5seq4}{RGB}{222,45,38}
\definecolor{Reds5seq5}{RGB}{165,15,21}

\allowdisplaybreaks

\usepackage[most,breakable]{tcolorbox}
	{\expandafter\ifstrequal\expandafter{#1}{orange}{\begin{tcolorbox}[colback=red!15,colframe=orange!15,breakable,enhanced]}{\begin{tcolorbox}[colback=Blues5seq1,colframe=Blues5seq5,breakable,enhanced]}}%
	{\end{tcolorbox}}